\newcommand{\plotsize}{3.2in}
\title{Coverage Analysis of Broadcast Networks with Users Having Heterogeneous Content/Advertisement Preferences}
\author{Kanchan K. Chaurasia, Reena Sahu, Abhishek K. Gupta
\thanks{ The authors are with the Department of Electrical Engineering, Indian Institute of Technology Kanpur, Kanpur, India 208016. Email: grkabhi@iitk.ac.in. %This research is supported by the Science and Engineering Research Board (DST, India) under the grant SRG/2019/001459.
} }
\date{}
\newcommand{\addi}[1]{{\color{black}#1}}
\newcommand{\sym}{\mathrm{s}}
\newcommand{\bbfunc}[1]{Q\left(#1\right)}
\newcommand{\R}{\mathbb{R}}
\newcommand{\indecator}[1]{\mathbbm{1}\left(#1\right)}
\newcommand{\rcv}{\mathrm{r_c}}
\newcommand{\dist}[1]{\|#1\|}
\newcommand{\origin}{\mathbf{o}}
\renewcommand{\Im}[1]{\mathsf{Im}\left[#1\right]}
\newcommand{\dd}{\mathrm{d}}
\newcommand\prob[1]{\mathbb{P}\left[#1\right]}
\newcommand\indside[1]{\mathbbm{1}\left({#1}\right)}
\newcommand{\SINR}{\text{SINR}}
\newcommand{\Rate}{\text{Rate}}
\newcommand{\Ball}{\mathcal{B}}
\newcommand{\expects}[2]{\mathbb{E}_{#1}\left[#2\right] }
\newcommand{\laplace}[1]{\mathcal{L}_{#1} }
\newcommand{\ie}{{\em i.e.}~}
\newcommand{\Pc}{\mathrm{p_c}}
\newcommand{\Rc}{\mathrm{r_c}}
\newcommand{\rcls}{\mathrm{r_0}}
\newcommand{\given}{|\,\X_0}
\newcommand{\givenm}{|\,X_0}
\newcommand{\tx}{\mathsf{t~\!}}
\newcommand{\x}{{\mathbf{x}}}
\newcommand{\X}{\mathbf{X}}
\newcommand{\expS}[1]{\exp{\left(#1\right)}}
\newcommand{\betafunc}[3]{\mathsf{B}\left(#2,#3;#1\right)}
\def\home{\hbox{\kern3pt \vbox to13pt{}% 
   \pdfliteral{q 0 0 m 0 5 l 5 10 l 10 5 l 10 0 l 7 0 l 7 5 l 3 5 l 3 0 l f
               1 j 1 J -2 5 m 5 12 l 12 5 l S Q }%
   \kern 13pt}}
\newcommand{\nf}{a}
\newcommand{\msqrt}{\overline{m}}
\newcommand{\DeltaT}{\Delta}
\renewcommand{\rcls}{u}
\newcommand{\RLimit}{\infty}
\newcommand{\Prx}{P_\rx}
\renewcommand{\Prx}{S}
\renewcommand{\SINR}{\mathtt{SINR}}
\renewcommand{\Rate}{\mathtt{Rate}}
\newcommand{\Nc}{N_\mathrm{c}}
\newcommand{\Rn}{\mathrm{R}_\mathrm{n}}
\newcommand{\selset}{\mathcal{M}}
\newcommand{\complementT}[1]{#1^{\complement}}
\newcommand{\set}[1]{\mathsf{#1}}
\newtheorem{theorem}{Theorem}
\newtheorem{lemma}{Lemma}
\newtheorem{remark}{Remark}
\begin{document}

\maketitle

\begin{abstract}
This work is focused on the system-level performance of a broadcast network. Since all transmitters in a broadcast network transmit the identical signal, received signals from multiple transmitters can be combined to improve system performance. We develop a stochastic geometry based analytical framework to derive the coverage of a typical receiver. We show that there may exist an optimal connectivity radius that maximizes the rate coverage. Our analysis includes the fact that users may have their individual content/advertisement preferences. We assume that there are multiple classes of users with each user class prefers a particular type of content/advertisements and the users will pay the network only when then can see content aligned with their interest. The operator may choose to transmit multiple contents simultaneously to cater more users’ interests to increase its revenue. We present revenue models to study the impact of the number of contents on the operator revenue. We consider two scenarios for users' distribution- one where users' interest depends on their geographical location and the one where it doesn’t. With the help of numerical results and analysis, we show the impact of various parameters including content granularity, connectivity radius, and rate threshold and present important design insights.

\end{abstract}

\begin{IEEEkeywords}
Stochastic geometry, Broadcast networks, Coverage.
\end{IEEEkeywords}

\section{Introduction}
Broadcasting networks  provide society with various services including TV communication, delivery of critical information and alerts,  general entertainment, and educational services and thus have been a key wireless technology. 
With the recent advancement in wireless technologies and handheld electronic devices including smartphone and tablets, the use of broadcasting services has been extended to include many modern applications including delivery of traffic information to vehicles in vehicle-to-infrastructure networks, advertisement industry and mobile TV services. 
Many broadcasting standards have been recently proposed %To further enhance the performance of these services, the industry has proposed latest broadcasting standards, such as 
including the digital video broadcast-terrestrial standard (DVB-T2), the advanced television systems committee standard (ATSC 3.0), and  the DVB-next generation handheld standard (DVB-NGH) to assist delivering of TV broadcasting services to mobile devices \cite{ahu2}\cite{ahu3}\cite{ahu4}. 
 The advent of digital broadcasting has lead to a significant increase in the demand for multimedia services for handheld devices including mobile TV, live video streaming, and on-demand video in the last decade \cite{ahu2}. From these applications' perspective, broadcasting based multimedia services can provide better data rate and performance compared to the uni-cast cellular network based mobile TV.
Note that in a cellular network  where the desired data is transmitted to each user via orthogonal resources,  users may suffer from spectral congestion in regions with high density due to limit bandwidth resulting in the performance degradation. But, in a
broadcasting network providing multimedia services,  all transmitters transmit identical data to all  users and hence, do not require orthogonal resource. 
%In the present, the broadcasting services to hand-held devices use uni-cast cellular networks to deliver the content. In the cellular system,  Due to limited bandwidth, in high user density areas uni-cast networks  which degrades its performance. In the s. 
In these networks, each transmitter can use the complete spectrum to serve their users, and hence, these are also  called  single frequency networks (SFN).  Due to this, users may experience a better quality of service.  %Since broadcast uses complete bandwidth and transmits fixed power it is more efficient in high-density areas and uni-cast is more efficient in low-density areas.

%A standard for  Multimedia broadcast/ multicast services (MBMS) standard is released by the third generation partnership project (3GPP) %\cite{ahu5} 
%in Release-6 and updated to include LTE (called as evolved-MBMS (eMBMS)) in Release-9  and further improved in Release-11 \cite{ahu15}. The eMBMS standard supports the use of hybrid unicast-broadcast transmission for LTE. The digital Video broadcast- second generation terrestrial standard (DVB-T2) and Advanced television systems committee (ATSC) 3.0 are two latest broadcasting standards used to deliver TV broadcasting services to mobile devices\cite{ahu2}\cite{ahu3}. A new standard DVB-next generation handheld (DVB-NGH) is also expected to arrive in the near future\cite{ahu4}.

\subsection{Related Work} 
Given increasing demand of broadcasting services, it is very interesting to analyze the broadcast
networks in terms of signal-to-noise-ratio (SINR) and achievable data rate to understand their limitations and potential to meet these demands. 
There have been some recent works in the system-level analysis of broadcast networks. 
%{ahu19}
%{D. {Catrein}, J. {Huschke}, U. {Horn}}, ``{Analytic Evaluation of a Hybrid
%  Broadcast-Unicast TV Offering},'' in \emph{Proc. IEEE Vehicular Technology
%  Conference-Spring (VTC-spring)}, 2008, pp. 2864--2868.
%https://ieeexplore.ieee.org/stamp/stamp.jsp?tp=&arnumber=4526180
%this paper consider a queue based model and computes blocakge probability
% In \cite{ahu19}, the authors proposed an analytical framework for the evaluation of capacity limit when delivering mobile TV services over the hybrid broadcast unicast network provided by multimedia broadcast/ multicast services (MBMS) standards. 
 In \cite{ahu19}, the authors evaluated the  blocking probability  for users accessing a network delivering mobile TV services over a hybrid broadcast unicast   communication.
%
%------------------------------------------------------------------------
%
%{Letian Rong and Salah Eddine Elayoubi, and Olfa Ben Haddada}, ``{Performance  Evaluation of Cellular Networks Offering {TV} Services},'' \emph{{IEEE  Transactions on Vehicular Technolgogy}}, vol.~60, no.~2, 2011.
%https://ieeexplore.ieee.org/stamp/stamp.jsp?tp=&arnumber=5634135
% Gave the SINR model for broadcast, but didnt compute any coverage. did some queing theory based evaluation
In \cite{ahu14}, the authors studied a cellular network with uni-cast and multicast-broadcast deployments. 
%In \cite{ahu14}, the authors proposed analytical models for the analysis of the performance of a cellular network for different deployment scenarios ({\em e.g.} uni-cast and multicast-broadcast-SFN) based on a realistic long term evolution (LTE) system model. 
%
%=============================================
%
%{ahu16}{Louis Christodoulou , Omar Abdul-Hameed, Ahmet M. Kondoz}, ``{Toward an LTE  Hybrid Unicast Broadcast Content Delivery Framework},'' \emph{{IEEE  Transactions on Broadcasting}}, vol.~63, no. Issue 4, pp. 656--672, 2017.
%In \cite{ahu16}, the authors proposed a hybrid unicast-broadcast synchronization framework for the delivery of multi-stream multimedia using the LTE evolved  (eMBMS) simulation platform.
%
%
%
% \bibitem{ahu10}
%{A. {Shokair}, J.-F. {Helard} O. {Bazzi}, M. {Crussiere} and Y. {Nasser}},
%  ``{Analysis of the Coverage Probability of Cellular Multicast Single
%  Frequency Networks},'' in \emph{{Proc. International Conference on Wireless
%  and Mobile Computing, Networking and Communications (WiMob)}}, 2019.%
%In \cite{ahu10}, the authors discussed an approach to calculate the distribution of the signal-to-interference-plus-noise ratio (SINR) of a broadcast network. 
%
%
%
However, these works didn't include the effect of transmitters' locations in the evaluations which is required for the system-level analysis of broadcast networks. 
Stochastic geometry framework can be utilized to analyze  wireless networks from system level perspective    \cite{JGA2011, Primer2016, ahu7}.
%Recently stochastic geometry has emerged as a tractable approach to analyze various wireless technologies  \cite{JGA2011, Primer2016, ahu7} and
Stochastic geometry based models have been validated for various types of networks including cellular networks,and  ad hoc networks \cite{Guo2015,JGA2011, AndrewsmmWaveTut2016,ahu8}. % In the recent few works, stochastic geometry is also used to model and analyze broadcast networks.
 In \cite{ahu9}, the authors describe the analytical approach to calculate the coverage probability of a hybrid broadcast and uni-cast network, however, the authors have only considered a single broadcast transmitter along with many uni-cast transmitters. 
%
%==============
%
%
% \bibitem{ahu10}
%{A. {Shokair}, J.-F. {Helard} O. {Bazzi}, M. {Crussiere} and Y. {Nasser}},
%They presented the SINR, however their math is wrong. also they were not able to compute Pc, they only gave a lower bound considering only N BS as serving. which is in reality not deterministic.
%  ``{Analysis of the Coverage Probability of Cellular Multicast Single
%  Frequency Networks},'' in \emph{{Proc. International Conference on Wireless
%  and Mobile Computing, Networking and Communications (WiMob)}}, 2019.
%I am removing it from here 
% \cite{ahu10} presented a model for SINR of a broadcast network, however did not provide expressions for the SINR and rate coverage performance. 
%
%
%[SEE]
%
%
%In\cite{ahu20}, the  authors consider the broadcast coverage radius according to users distribution and give an analytical approach to derive the expression of optimal broadcast radius that maximizes the average capacity per user, however, the authors have only considered a single broadcast transmitter along with many uni-cast transmitters. 
 In our past work \cite{SaChGup2020}, we have considered a broadcast network with multiple broadcasting transmitters to compute the coverage performance of users. However, the work assumed a static connectivity region around the user where transmitters need to be located to be able to serve the users. As shown in this paper,  this connectivity region is of variable size depending on the location of the first closest transmitter. To the best of our knowledge, there exists no other past work which analyzes the SINR and rate performance of a broadcast network with multiple broadcasting
transmitters which is one of the main focuses of this paper.

Another important metric to evaluate broadcast networks is the revenue earned by the network operator.  %The revenue that a service provider gets is also an important factor in the implementation of a framework. 
In a broadcast network, the revenue is generated either from subscribers as network access fees for viewing content of their choice or from advertisers to show their advertisements to interested subscribers. 
With digital broadcast, subscriptions and user-targeted advertisements added a new dimension in the revenue. 
Due to advancements in technologies over the past few decades,  the advertising has become more user targeted and location-adaptive which can be  planned according to the user demographics and their preferences to improve network revenue. 
In \cite{ahu23}, the authors studied location-based mobile marketing and advertising to show the positive interest of mobile consumers  in receiving relevant promotions. It is intuitive that  a targeted and localized content will have a better engagement factor. It is interesting to analyze the network revenue earned from the users with their preference dependent on their choices and geographical location. As far we know, there does not exist any past work that analyses the network revenue of a broadcast network with subscribers having preferences for content and  advertisement which is another focus of this paper.

\subsection{Contributions} 
In this paper, we derive an analytical framework to evaluate the performance of a broadcast network with multiple broadcasting transmitters with users having content preference. We also present a revenue model to quantify the network revenue earned by the network operator.  In particular, the contributions of this paper are as follows:
\begin{enumerate}
\item We consider a broadcast network with multiple transmitters. Since all transmitters in a broadcast network are transmitting the same signal, received signals from multiple transmitters from a certain connectivity region around the user can be combined to improve the coverage at this user.  Using tools from stochastic geometry, we derive the expression for SINR coverage and rate coverage of a typical receiver located at the origin. Due to the contribution in the desired signal power from multiple transmitters, the analysis is significantly different and difficult than their cellular counterpart. Our main contribution lies in developing the framework and deriving techniques to evaluate the analytical expressions of SINR and rate coverage. We show that this connectivity region depends on network bandwidth.
\item  We present some numerical results to validate our analysis and present design insights. We show the impact of connectivity region size, path-loss exponents, and the network density on the SINR and rate coverage. We also find that there exists an optimal size of connectivity region that maximizes the rate coverage. 
\item In this paper, we also include the fact that users may have their individual content or advertisement preferences. 
We assume that there are multiple classes of users with each class of users prefers a particular type of content/advertisements and the users will pay the network only when then can see a particular content of their interest. We assume that one unit of revenue comes to the network from a particular class of users if every user of this class can see the content as per the preference of this class.    We  study the revenue thus obtained by the network from users. The broadcast operator may choose to transmit multiple contents simultaneously to cater more users' interest to increase its revenue. However, given the limited resources, the network can cater only to few classes and this capability depends on how these user classes are distributed spatially. There are two scenarios considered for users' distribution. In one scenario, users' interest depend on their geographical position in the network and in the second scenario it does not. We calculate the analytical expression for SINR coverage and rate coverage at a typical user and evaluate the total revenue. We present many important design insights via numerical results.
\end{enumerate}

\textbf{Notation:} Let $\Ball(\x,r)$ denote the ball of radius $r$ with center at $\x$. $\dist{\x}$ denotes the norm of the vector $\x$ and $\dist{\x_i}=r_i$ denotes the random distance of BBS located at ${\x_i}$.  Let $\origin$ denote the origin. $\betafunc{z}{x}{y}$ is the incomplete Beta function which is defined as
 $$\betafunc{z}{x}{y}=\int_{0}^{z}u^{x-1}(1-u)^{y-1}\dd u.$$ Let $c$ denote the speed of EM waves in the  media.  $\complementT{\set{A}}$ denotes the complement of set $\set{A}$.

\section{System Model}

In this paper, we consider a broadcast network with multiple broadcasting base stations (BBSs), deployed in the 2D region $\set{R}=\mathbb{R}^2$. The considered system model is as follows:

 \begin{figure}[ht!]
    \centering
    \includegraphics[width=0.45\textwidth,trim=185 37 158 16,clip]{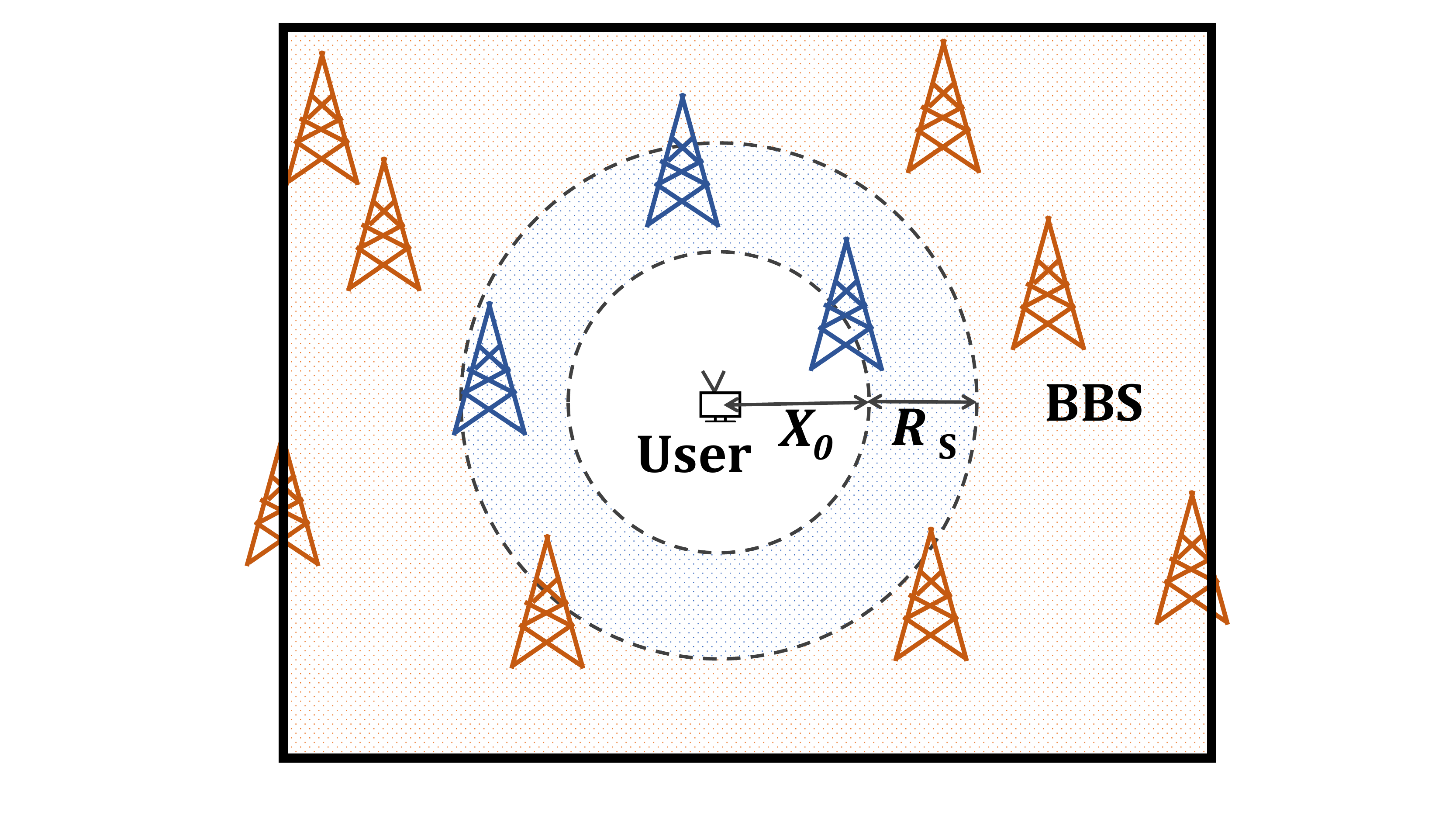}
   \caption{Illustration of system model of a broadcast network. A typical user is considered at the origin. ${X_0}$ is the distance of the nearest BS from the typical user. The 2D region $\Ball(\origin,{X_0}+R_\sym)$ denotes the connectivity region of the user.}
    \label{fig:System Modelfig}
\end{figure}

\subsection{Network Model}
The location of BBSs can be modeled as a homogeneous Poisson point process $\Phi=\{\X_{i}\in \mathbb{R}^2\}$ with density $\lambda$ in the region $\set{R}$ (See Fig. \ref{fig:System Modelfig}). Let $R_e=1/\sqrt{\lambda\pi}$ which represents the cell radius of an average cell. The subscribers (users) of the broadcasting service are assumed to form a stationary point process. We  assume a typical user located at the origin $\origin$. Consider each BS is operating in the same frequency band with the transmission bandwidth $W$. 
 Let $T_{\sym}$ is the symbol time of the transmitted symbol which is inversely proportional to the bandwidth $W$. Assume the transmit power of each BBS be $p_\tx$ and all devices are equipped with a single isotropic transmit antenna. The analysis can be extended for finite networks by taking $\set{R}=\Ball(\origin,R)$ with a finite $R$. %=\Ball(\origin,R)$. 

 \subsection{Channel Model}
We assume the standard path-loss model. Hence, the received signal power from the $i$th BBS at the typical user at origin is given as
  \begin{align}
     P_i&= p_\tx \nf \beta_{i} {\dist{\X_{i}}}^{-\alpha},\label{eq:0.1}
  \end{align}
where   $X_i=\dist{\X_i}$ denotes the random distance of this BBS from the typical user.  Here, $\alpha$ is the path-loss exponent and $\nf$ is near-field gain which depends on the propagation environment. $\beta_{i}$ denotes the fading between the $i^{th}$ BBS and the user. We assume Rayleigh fading, \ie,    $\beta_{i}\sim\mathrm{Exp}(1)$ for tractability. 
  
  \subsection{Serving Signal and Interference Model}
In a broadcast system, multiple BBSs may transmit the same data at the one frequency band (as suggested by the name SFN). Therefore, at the receiver end, it can be seen as a single transmission with multi-path propagation and signals transmitted from multiple BBSs can be combined at the user.  However, since the signals from different BBSs are delayed according to time delays dependent on their distance, some of these signals may be delayed significantly and may overlap with the next transmission slots. Therefore, only those signals that have  delay within a certain limit can be combined to successfully decode the received symbol \cite{ahu21}.  The rest BBSs contribute to the  ISI (inter-symbol interference) which can be significant depending on the BBSs density. %Hence, the transmitted signals from only those BBSs whose delay falls under the specified limit can be combined.The ISI (inter-symbol interference) is significant in SFN because of the large symbol duration of OFDM. To overcome the effects of ISI, tight time synchronization is needed.

 Let $\X_0$ denotes the nearest serving BBS.  %Here, $\X_0$ is the location of nearest BBS.
 The probability density function of the distance    $X_0=\dist{\X_0}$ to the nearest BS from the user is given as\cite{Primer2016}
 \begin{align}
     f_{X_0}(\rcls)=2\pi\lambda \rcls e^{-\pi\lambda \rcls^2} \indside{\rcls\ge0}.\label{eq:5.10}
 \end{align}
 
 The time taken by the signal to reach from the $i^{th}$ BBS located at $\X_i$ to the typical user at $\origin$ be $T_i={X_i}/c$.  In particular, $T_0$ denotes the time taken by signal to reach from $\X_0$ to a typical user at $\origin$. % Let $T_\delay=t_i-t_0$ be the time by which signal from $i^{th}$ BBS is delayed. 
Let the propagation delay of transmitted signal from $i^{th}$ BBS compared to the nearest serving BBS is $\DeltaT_i=T_i-T_0$.

 We assume that the receiver design allows the maximum delay of $\delta T_\sym$  for the received signals to be combined at the  user where $\delta\in[0\ 1]$ is a design parameter. This means that the received signal from the $i^{th}$ BBS  may contribute in the serving signal power if  $\DeltaT_i\leq \delta T_{\sym}$. This condition is equivalent to the condition  $\dist{\X_i}-\dist{\X_0}\leq R_\sym\stackrel{\Delta}{=} T_\sym\delta c$ on the BBSs location $\X_i$. In other words,  this means that all the BBSs that are located in the 2D region  $\{\X:\dist{\X}\le \dist{\X_0}+R_\sym \}=\Ball(\origin,{X_0}+R_\sym)$   can contribute to the serving signal  at the typical receiver at origin $\origin$. 
 %Let \nom{$\delta$}{a design parameter}$<1$ be a design parameter. Assume if the time delay from BBS at $\X_i$, \nom{$T_\delay$}{time delay of signal}$\leq \delta T_{\sym}$, then the received signal can be successfully decoded to get the transmitted symbol and hence, the BBS will act as serving BBS. This can be equivalent to the statement that BBSs with $\dist{X_i}-\dist{X_0}\leq  R_\sym\stackrel{\Delta}{=} T_\sym\delta c$ will contribute to the desired signal power.
We term this region $\Ball(\origin,{X_0}+R_\sym)$ as 
 %The 2D region $\{\X:\dist{\X}\le \dist{\X_0}+R_\sym \}=\Ball(\origin,{r_0}+R_\sym)$ denotes
  the connectivity region for the user and ${X_0}+R_\sym$ can be termed as the connectivity radius. % the channel is Rayleigh fading channel.
Let $\msqrt^2={\lambda\pi R_\sym^2}$ denote the mean number of BBSs in this connectivity radius.

On the other hand, all the BBSs located outside $\Ball(\origin,{X_0}+R_\sym)$ \ie all the BBSs with ${X_i}\geq {X_0}+  R_\sym$ will contribute to the interference power even when they are transmitting the same data as their signal will be delayed beyond the specified limit.  

\subsection{Modeling Content Preferences of Users}
In this paper, we also include the fact that users may have their individual content or advertisement preferences. 
We assume that there  are $\Nc$ classes of users. Here, $\Nc$ is termed {\em content/advertisement granularity}. Each class of users prefers a particular type of content/ advertisements. We assume that the users will pay the network only when then can see a particular content of their interest. Each class consists of some quanta of users. For simplicity, we assume that each class has the same number of users, however, the presented framework can be trivially extended to include user classes with unequal sizes. We assume that one unit of revenue comes to the network from a particular class of users if every user of this class can see the content as per the preference of this class. Given the limited resources, the network can cater only to few classes and this capability depends on how these user classes are distributed spatially.  We will consider two types of users class distributions over the geographical space. 
We will also discuss a revenue model to characterize the network's revenue to help us understand optimal scheduling policies for the two scenarios.

\section{Coverage Analysis for Common Content Transmission}
We first start with the scenario that all users seek the same content, hence, all BBSs are transmitting the same content to everyone. Examples include systems transmitting emergency information, or traffic data which is common to every user. In this section, we will derive the  SINR and rate coverage probability for a typical user at the origin $\origin$ for such system. 

\subsection{SINR}
Since all BBSs are transmitting the same content, all BBSs located inside the connectivity region $\Ball(0,X_0+\R_\sym)$ contribute to the signal power. Therefore, the desired received signal power for the typical user at origin is given as
  \begin{align}
     \Prx'&=p_\tx \nf \beta_{0} {\dist{\X_{0}}}^{-\alpha}+\sum_{\X_{i}\in\Phi\cap \Ball(0,{X_0}+R_{\sym})\setminus \X_0} p_\tx \nf \beta_{i} {\dist{\X_{i}}}^{-\alpha}.\label{eq:1}
  \end{align}
  %Here, $\alpha$ is the path-loss exponent and $\nf$ is near-field gain which depends on the propagation environment. $\beta_{i}$ denotes the fading between the $i^{th}$ BBS and the user. We assume Rayleigh fading, \ie,    $\beta_{i}\sim\mathrm{Exp}(1)$. 
 
 Similarly, the total interference  can be  given as 
   \begin{align}
I' &=\sum_{\X_{j}\in\Phi\cap \complementT{\Ball(0,\,X_0+R_{\sym})}}p_\tx \nf \beta_{j}{\dist{\X_{j}}}^{-\alpha}.\label{eq:2}
 \end{align}
 The signal-to-interference-plus-noise ratio (SINR) at the typical  receiver is given as
  \begin{align}
     \SINR&=\frac{\Prx'}{I'+N}
\iftoggle{SingleColumn}{}{\nonumber\\
  &}
=\frac{\beta_{0} {{X_{0}}}^{-\alpha}+\sum_{\X_{i}\in\Phi\cap \Ball(0,{X_0}+R_{\sym})\setminus\X_0} \beta_{i} {{X_{i}}}^{-\alpha}}
     {\sum_{\X_{j}\in\Phi\cap \complementT{\Ball(0,{X_0}+R_{\sym})}}\beta_{j}{{X_{j}}}^{-\alpha}+\sigma^{2}}.\label{eq:3}
%     &=\frac{\beta_{0} {\dist{\X_{0}}}^{-\alpha}+\sum_{\X_{i}\in\Phi\cap \Ball(0,{X_0}+R_{\sym})} \beta_{i} {\dist{\X_{i}}}^{-\alpha}}
%     {\sum_{\X_{j}\in\Phi\cap \complementT{\Ball(0,{X_0}+R_{\sym})}}\beta_{j}{\dist{\X_{j}}}^{-\alpha}+\sigma^{2}}.\label{eq:3}
 \end{align}
Here, $\sigma^2$ is the normalized noise power  given as $\sigma^2=N/(p_\tx \nf)$ where $N$ is the noise power. Similarly normalized  desired received signal power and interference are denoted by $S$ and $I$ which are given as $S=S'/(p_\tx \nf)$ and $I=I'/(p_\tx \nf)$. Hence, the SINR is  equal to
  \begin{align}
     \SINR&=\frac{\Prx}{I+\sigma^2}.
 \end{align}
Let $K=\sigma^2/R_e^{-\alpha}$ which represents the SNR at cell edge of an average cell.

\subsection{SINR Coverage Probability}\label{sinr_calc}
The SINR coverage probability $\Pc(\tau,\lambda)$ of a user is defined as the probability that the SINR at the user is above the threshold $\tau$ \ie
\begin{align}
     \Pc(\tau,\lambda)&= \prob{\SINR>\tau}
 \end{align}
 Using the conditioning on  the nearest serving BBS's location $\X_0$, the SINR coverage for typical user at $\origin$ is given as
\begin{align}
     \Pc(\tau,\lambda)&=\expects{\X_0}{\mathbb{P}\left(\SINR>\tau\right)\given}\nonumber\\&=\expects{\X_0}{\mathbb{P}\left(\frac{\Prx}{I+\sigma^{2}}>\tau\right)\given}
\iftoggle{SingleColumn}{}{\nonumber\\&}=\expects{\X_0}{\mathbb{P}\left(\Prx>(I+\sigma^{2})\tau\given\right)}\label{eq:3.2}.
 \end{align}
 Using the distribution of $\dist{\X_0}=X_0$, the SINR coverage probability can be further written as
\begin{align}
   \Pc(\tau, \lambda)  &=\expects{\X_{0}}{\mathbb{P}\left(\Prx>\tau\left(I+N\right)\,\vert\,\X_{0}\right)}\nonumber\\&=\int_{0}^\infty2\pi\lambda \rcls e^{-\pi\lambda \rcls^2}\mathbb{P}\left(\Prx>\tau\left(I+\sigma^2\right)\, \vert\,\dist{\X_{0}}=\rcls\right)\dd \rcls.\label{eq:3.4} 
\end{align}
To  solve the inner term further, we will use  Gil Pelaez's Lemma \cite{ahu11} which states that the CDF of a random variable $Y$ can be written in term of its Laplace transform $\laplace{Y}(t)$ as
\begin{align}
     F_Y(s)=\prob{Y \le s} = \frac{1}{2}- \frac{1}{\pi} \int_{0}^{\infty}\frac{1}{t}\Im{e^{-jts}\laplace{Y}(-jt)}\dd t.\label{eq:3.5}
 \end{align} 
% where $\laplace{Y}(.)$ denotes the Laplace transform of $Y$.
 Using this Lemma, we get,
 \begin{align}
\mathbb{P}\left(\Prx>(I+\sigma^{2})\tau\given\right)
\iftoggle{SingleColumn}{}{\nonumber\\}&
=\expects{I\given}{
     \frac{1}{2}+ \frac{1}{\pi} \int_{0}^{\infty}\frac{1}{t}\Im{e^{-jt\tau(I+\sigma^2)}\laplace{\Prx}(-jt)}\dd t
     }\nonumber\\
     &=\frac{1}{2}+ \frac{1}{\pi} \int_{0}^{\infty}\frac{1}{t}\Im{\expects{I\given}{e^{-jt\tau(I+\sigma^2)}}\laplace{\Prx\given}(-jt)}\dd t
     \nonumber\\
 &=\frac{1}{2}+ \frac{1}{\pi} \int_{0}^{\infty}\frac{1}{t}\Im{\laplace{I\given}(jt\tau) e^{-jt\tau\sigma^2}\laplace{\Prx\given}(-jt)}\dd t. \label{eq:3.7}
 \end{align}
 Now, using  in \eqref{eq:3.4}, the SINR coverage probability is
 \begin{align}
   \Pc(\tau, \lambda) &=\frac{1}{2}+\frac{1}{\pi} \int_{0}^{\infty}\int_{0}^\infty2\pi\lambda \rcls e^{-\pi\lambda \rcls^2} \frac{1}{t}
\iftoggle{SingleColumn}{}{\nonumber\\&\ \ \ \times} \Im{\laplace{I\given}(jt\tau) e^{-jt\tau\sigma^2}\laplace{\Prx\given}(-jt)}\dd t\  \dd \rcls . \label{eq:cov_prob}
   \end{align}
  Here, $\laplace{I\given}(.)$ and $\laplace{\Prx\given}(.)$ are the Laplace transform of the sum interference $I$ and of the desired  received signal power $\Prx$ respectively which are given in the following Lemma.

  \begin{lemma}\label{lemma:1}
  The Laplace transforms of the desired signal  power and the sum interference  at the receiver located at origin $\origin$ are given as
  \begin{align}
    \mathcal{L}_{\Prx\,\vert\,\X_{0}}(s)&=\frac{1}{1+sX_0 ^{-\alpha}}\exp\left(-2\pi\lambda \int_{X_0}^{X_0+R_{\sym}}\frac{sr^{-\alpha}}{1+sr^{-\alpha}}r\,\dd r \right)\label{eq:3_10}\\
\laplace{I}(s)&=\exp{\left(-2\pi\lambda\int_{X_0+R_{\sym}}^{\RLimit}\frac{sr^{-\alpha}}{1+sr^{-\alpha}}r\,\dd r \right)}\label{eq:3.8}
\end{align}
%
%  \begin{align}
% & \laplace{\Prx\given}(-jt)%&=\frac{1}{1-jt\rcls ^{-\alpha}}\exp\left(-2\pi\lambda \int_{\rcls}^{\rcls+R_{\sym}}\frac{-jtr^{-\alpha}}{1-jtr^{-\alpha}}r\dd r \right)
% \nonumber\\&=\frac{1+jt\rcls ^{-\alpha}}{1+t^{2}\rcls ^{-2\alpha}}\exp\left(-2\pi\lambda \int_{\rcls}^{\rcls+R_{\sym}}\frac{-jtr^{-\alpha}}{1-jtr^{-\alpha}}r\dd r \right)\label{eq:3_10}
%   \end{align}
%   \begin{align}
%\laplace{I\given}(jt\tau)&=\exp{\left(-2\pi\lambda \int_{\rcls+R_{\sym}}^{\RLimit}\frac{jt\tau r^{-\alpha}}{1+jt\tau r^{-\alpha}}r\,\dd r\right)}%\nonumber\\&=\exp{\left(-2\pi\lambda \int_{\rcls+R_{\sym}}^{R}\frac{t^2\tau^2 r^{-2\alpha}+jt\tau r^{-\alpha}}{1+t^2\tau^2 r^{-2\alpha}}r\,\dd r\right)}
%\label{eq:3.8}
%\end{align}
  \end{lemma}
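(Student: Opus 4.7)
The plan is to exploit Rayleigh fading (which renders the fading Laplace transform a simple rational function) together with the probability generating functional (PGFL) of the Poisson point process, after carefully splitting $\Prx$ into the nearest-BBS term and the remaining in-ring terms. By independence of the marks $\{\beta_i\}$ and their independence of $\Phi$, the conditional Laplace transform $\laplace{\Prx\mid\X_0}(s)$ factorizes as
\begin{align*}
\mathcal{L}_{\Prx\mid\X_0}(s)=\mathbb{E}\!\left[e^{-s\beta_{0}X_0^{-\alpha}}\mid X_0\right]\cdot \mathbb{E}\!\left[\exp\!\left(-s\!\!\sum_{\X_i\in\Phi\cap\Ball(\origin,X_0+R_\sym)\setminus\{\X_0\}}\!\!\beta_i\dist{\X_i}^{-\alpha}\right)\mid\X_0\right].
\end{align*}

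For the first factor, $\beta_0\sim\mathrm{Exp}(1)$ gives $1/(1+sX_0^{-\alpha})$ directly. For the second factor I would invoke the standard property that, conditional on the nearest point of a homogeneous PPP being at distance $X_0$, the remaining points form a PPP of intensity $\lambda$ on $\{\x:\dist{\x}>X_0\}$. Restricting this to the annulus $\{\x:X_0<\dist{\x}<X_0+R_\sym\}$ and first taking expectation over each $\beta_i$ to replace the summand factor by $1/(1+s\dist{\X_i}^{-\alpha})$, the PGFL of the PPP yields
\begin{align*}
\exp\!\left(-\lambda\!\int_{\Ball(\origin,X_0+R_\sym)\setminus\Ball(\origin,X_0)}\!\!\left(1-\tfrac{1}{1+s\dist{\x}^{-\alpha}}\right)\dd\x\right).
\end{align*}
Converting to polar coordinates, integrating out the angle to contribute a factor of $2\pi r$, and simplifying $1-1/(1+sr^{-\alpha})=sr^{-\alpha}/(1+sr^{-\alpha})$, produces exactly \eqref{eq:3_10}.

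The derivation of $\laplace{I}(s)$ in \eqref{eq:3.8} follows by the identical PGFL argument applied instead to the exterior region $\{\x:\dist{\x}>X_0+R_\sym\}$; here there is no detached nearest-BBS term to peel off, so one PGFL application suffices and the radial integral simply runs from $X_0+R_\sym$ to $\infty$.

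The only genuinely delicate point is the conditioning: one must justify why, given $X_0$, the interferers (and the ring points other than $\X_0$) still constitute a PPP of intensity $\lambda$ on the prescribed exterior set. This is handled by the standard reduction that conditioning a homogeneous PPP on its void probability in $\Ball(\origin,X_0)$ and on the presence of a point at the boundary leaves the restriction to the complement unchanged in distribution — essentially a Slivnyak/void-probability argument. Once this is in place, the remaining calculations are routine applications of the exponential-fading Laplace transform and the PPP PGFL, so no further obstacle is anticipated.
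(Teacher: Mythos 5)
Your proposal is correct and follows essentially the same route as the paper: peel off the nearest-BBS term (whose $\mathrm{Exp}(1)$ fading gives $1/(1+sX_0^{-\alpha})$), use the Slivnyak/conditioning argument that, given the nearest point at distance $X_0$, the remaining points form a PPP of intensity $\lambda$ outside $\Ball(\origin,X_0)$, and then apply the PGFL with the exponential-fading Laplace transform over the annulus for $\Prx$ and over the exterior region for $I$, converting to polar coordinates. Your explicit justification of the conditioning via the void-probability/Slivnyak reduction is, if anything, stated more carefully than in the paper's appendix, but it is the same argument.
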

\begin{proof}
See Appendix \ref{proof1}.
 \end{proof}
 Using Lemma \ref{lemma:1} in \eqref{eq:cov_prob}, we can get the SINR coverage which is given in Theorem \ref{thm1}.

 \begin{theorem}\label{thm1}
 The probability of the SINR coverage for the user located at the origin in a broadcast network with $\lambda$ density of BBSs, is given as
\begin{align}
\Pc(\tau,\lambda)=&\frac{1}{2}+\frac{1}{\pi} \int_{0}^{\infty}2\pi\lambda \rcls e^{-\pi\lambda \rcls^2}\int_{0}^\infty\frac{1}{t}\Im{{\color{black}\frac{e^{-jt\tau\sigma^{2}}}{1-jt\rcls ^{-\alpha}}}\right.\nonumber\\
    &\hspace{20mm}\times\left.\exp\left(-2\pi\lambda\left(\int_{\rcls}^{\rcls+R_{\sym}}\frac{-jtr^{-\alpha}}{1-jtr^{-\alpha}}r\,\dd r
\iftoggle{SingleColumn}{}{\right.\right.\right.\nonumber\\&\hspace{30mm}\left.\left.\left.}
+\int_{\rcls+R_{\sym}}^{\RLimit}\frac{jt\tau r^{-\alpha}}{1+jt\tau r^{-\alpha}}r\,\dd r\right)\right)}\dd t\  \dd \rcls\nonumber
    \\
    %&=\frac{1}{2}+\frac{1}{\pi}\int_{0}^{\infty}2\pi\lambda \rcls e^{-\pi\lambda \rcls^2}\int_{0}^{\infty}\frac{1}{t}\Im{e^{-jt\tau\sigma^2}\frac{1+jt\rcls^{-\alpha}}{1+t^{2}\rcls ^{-2\alpha}}e^{-2\pi\lambda t^{2/\alpha}\left[M(t,\,\rcls)+j N(t,\,\rcls)\right]/2\alpha}}\dd t \dd \rcls\nonumber\\  
%=&\frac{1}{2}+\frac{1}{\pi} \int_{0}^{\infty}\int_{0}^\infty2\pi\lambda \rcls e^{-\pi\lambda \rcls^2}\cdot\frac{1}{t}\cdot\left[\frac{1}{1+t^{2}\rcls ^{-2\alpha}}\right]
%e^{-2\pi\lambda t^{2/\alpha}M(t,\,\rcls)/2\alpha}
%\nonumber\\&
%	\times\left[t\rcls^{-\alpha}\cos
%		{\left(
%			\frac{\pi}{\alpha}\lambda {t^{2/\alpha}}N(t,\,\rcls)+\,t\tau\sigma^2
%		\right)}
%\iftoggle{SingleColumn}{}{\right.\nonumber\\&\left.}
%			-\sin{\left(\frac{\pi}{\alpha}\lambda{t^{2/\alpha}}N(t,\,\rcls)
%			+\,t\tau\sigma^2\right)}
%	\right]
%\dd t\dd \rcls\\
%New Version
=&\frac{1}{2}+\frac{1}{\pi} \int_{0}^{\infty}\int_{0}^\infty
 	2v\frac1s
		\left[
		\frac{1}{1+s^2 v^{-2\alpha}}\right]
       e^{-v^2} e^{-s^{\frac2\alpha} M_d(s,v)}
		\nonumber\\
		&     \times   \left[
		s v^{-\alpha}
		\cos\left(s^{\frac2\alpha} N_d(s,v)+\tau s K\right)
        -\sin\left(s^{\frac2\alpha} N_d(s,v)+\tau s K\right)
		\right]
\dd v\;\dd s
\label{eq:3.26}
\end{align}
 where $M_d(t,\,\rcls)$ and $N_d(t,\,\rcls)$ are given as
\begin{align}
 M_d(s,v)&= \frac1\alpha\left[
		\bbfunc{
		    						\frac1\alpha,
		s^2(v+\msqrt)^{-2\alpha},
		s^2v.^{-2\alpha}
		}+
		       \tau ^{2/\alpha}
		\bbfunc{
		    						\frac1\alpha,
		0,
		\tau^2 s^2(v+\msqrt)^{-2\alpha}
		}\right]\label{thm1:mddef}
		\\   
 N_d(s,v)&= \frac1\alpha
		\left[
			-\bbfunc{
    						\frac1\alpha+\frac12,
			s^2(v+\msqrt)^{-2\alpha},
			s^2 v^{-2\alpha}
			}
			+\tau^{\frac2\alpha}
			\bbfunc{
    						\frac1\alpha+\frac12,
						0,
						\tau^2 s ^2 (v+\msqrt)^{-2\alpha}
					}
		\right]\label{thm1:nddef}\\
  						\text{with } &
\bbfunc{z,a,b}=\iftoggle{SingleColumn}{}{\nonumber\\&} \betafunc{\frac{1}{1+a}}{z}{-z+1}- \betafunc{\frac{1}{1+b}}{z}{-z+1}.
\end{align}

 \end{theorem}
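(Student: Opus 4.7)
The plan is to start from equation \eqref{eq:cov_prob}, plug in the two Laplace transforms given in Lemma \ref{lemma:1}, and then reduce the resulting double integral to the closed-form expression in the theorem via a sequence of substitutions.

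First, I would substitute $\laplace{\Prx\,\vert\,\X_0}(-jt)$ and $\laplace{I\,\vert\,\X_0}(jt\tau)$ from Lemma \ref{lemma:1} directly into \eqref{eq:cov_prob}. The pre-factor $(1+sX_0^{-\alpha})^{-1}$ with $s=-jt$ becomes $(1-jt\rcls^{-\alpha})^{-1}$, and the two exponentials combine into a single exponential of a sum of two $r$-integrals — one over $[\rcls,\rcls+R_\sym]$ with complex weight $-jtr^{-\alpha}/(1-jtr^{-\alpha})$ from the desired signal, and one over $[\rcls+R_\sym,\infty)$ with weight $jt\tau r^{-\alpha}/(1+jt\tau r^{-\alpha})$ from the interference. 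Together with the factor $e^{-jt\tau\sigma^2}$ arising from noise, this reproduces verbatim the first displayed line of the theorem.

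Next, to reach the second (closed) form I would evaluate the two inner $r$-integrals using the substitution $u=(1+\beta r^{-\alpha})^{-1}$, which maps integrals of the form $\int \frac{\beta r^{-\alpha}}{1+\beta r^{-\alpha}}\,r\,\dd r$ to an incomplete Beta function $\betafunc{\cdot}{1/\alpha}{\cdot}$ up to a factor of $\beta^{2/\alpha}/\alpha$. Performed with complex $\beta=-jt$ on the desired-signal piece and real $\beta=jt\tau$ on the interference piece (treated as complex for unification), the two contributions combine into a single expression of the form $s^{2/\alpha}(M_d(s,v)+jN_d(s,v))$ where $s$ is the rescaled integration variable. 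To make this rescaling explicit I would set $v=\rcls\sqrt{\lambda\pi}$, so that $2\pi\lambda\rcls e^{-\pi\lambda\rcls^2}\,\dd\rcls \to 2v\,e^{-v^2}\,\dd v$ and $\msqrt=R_\sym\sqrt{\lambda\pi}$ naturally appears as the upper limit shift, and introduce a corresponding rescaling of $t$ to a new variable $s$ absorbing $\lambda$ and $\alpha$. The $K=\sigma^2/R_e^{-\alpha}$ constant will emerge from the $e^{-jt\tau\sigma^2}$ factor after the same rescaling.

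Finally, I would separate real and imaginary parts. Writing the total exponent as $-s^{2/\alpha}M_d - j(s^{2/\alpha}N_d+\tau s K)$ and multiplying by the remaining complex pre-factor $(1-jt\rcls^{-\alpha})^{-1}=(1+(sv^{-\alpha})^2)^{-1}(1+jsv^{-\alpha})$, the imaginary part pulls out cleanly into the $[sv^{-\alpha}\cos(\cdot)-\sin(\cdot)]$ bracket that appears in the theorem. The factor $1/t$ in the Gil-Pelaez integrand becomes $1/s$ under the rescaling.

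The main obstacle will be step (a): carrying out the $r$-integrals with complex parameter and identifying the real and imaginary contributions with the $M_d$ and $N_d$ halves of \eqref{thm1:mddef}--\eqref{thm1:nddef}. In particular, the incomplete Beta function with complex argument must be split so that one of its two parameters becomes $1/\alpha$ (real part, giving decay $M_d$) and the other becomes $1/\alpha+1/2$ (imaginary part, giving phase $N_d$); verifying this split and tracking the sign conventions through the complex substitution is the delicate bookkeeping that makes this proof non-trivial. Everything else is a routine change of variables and separation of real and imaginary parts.
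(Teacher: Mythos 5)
Your overall route is the same as the paper's: substitute the Lemma~\ref{lemma:1} Laplace transforms (at $-jt$ and $jt\tau$) into the Gil-Pelaez expression \eqref{eq:cov_prob}, express the exponent through incomplete Beta functions, rescale $t\to s/(\lambda\pi)^{\alpha/2}$ and $\rcls\to v/\sqrt{\lambda\pi}$ so that $\msqrt$ and $K$ appear, and finally separate real and imaginary parts to obtain the $\left[s v^{-\alpha}\cos(\cdot)-\sin(\cdot)\right]$ bracket. The first displayed line of the theorem and the final separation are handled exactly as you describe, and your identification of the rescaling that produces $2v e^{-v^2}$ and the $\tau s K$ phase is correct.

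The one step that would not go through as written is the middle one. Applying $u=(1+\beta r^{-\alpha})^{-1}$ directly to $\int \frac{\beta r^{-\alpha}}{1+\beta r^{-\alpha}}\,r\,\dd r$ with complex $\beta$ yields an incomplete Beta function with first parameter $2/\alpha$ (not $1/\alpha$) evaluated at a \emph{complex} argument, and there is no identity that ``splits'' the parameters of such a function into a $1/\alpha$ real part and a $1/\alpha+1/2$ imaginary part. Those two parameters arise only if you rationalize the integrand \emph{before} integrating: multiplying numerator and denominator by the conjugate gives $\frac{-jtr^{-\alpha}}{1-jtr^{-\alpha}}=\frac{t^{2}r^{-2\alpha}}{1+t^{2}r^{-2\alpha}}-j\frac{tr^{-\alpha}}{1+t^{2}r^{-2\alpha}}$ (and similarly for the interference term with $t\tau$ in place of $t$), after which the real-part integral, which has effective path-loss exponent $2\alpha$, maps under the real substitution $w=(1+t^{2}r^{-2\alpha})^{-1}$ to $\bbfunc{\frac1\alpha,\cdot,\cdot}$, while the imaginary-part integral maps to $\bbfunc{\frac1\alpha+\frac12,\cdot,\cdot}$. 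This is precisely what the paper's appendix does via its intermediate functions $M(t,\rcls)$ and $N(t,\rcls)$ in \eqref{app1:mdef}--\eqref{app1:ndef}. So the fix is: split the integrand first, then integrate each real piece, rather than integrating the complex form and trying to split the resulting Beta function. With that correction your outline coincides with the paper's proof.
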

 
 \begin{proof}
 See Appendix \ref{proof_thm1}.
 \end{proof}

 Theorem \ref{thm1} provides the SINR coverage in terms of two parameters: $K$ which denotes the inverse of SNR at the cell edge and $\msqrt^2$ which denotes the mean number of BBSs in connectivity radius circle. Further we can derive the following remarks.

\begin{remark}
For interference limited scenario, $K=0$, which means the coverage probability is a function of $\msqrt$ only. In case $\msqrt$ is fixed, individual variation of $\lambda$ and $R_\sym $ will not change the coverage.
\end{remark}
 
\begin{remark}\label{remark:sinrvariance}
For a broadcast network, an increase in the BBS density $\lambda$ improves both the desired signal power and the interference power. However, due to increase in number of serving BBSs due to increase in $\lambda$ which  improve the overall SINR coverage (which can also be seen in the numerical results). This behavior is different than conventional cellular case. Recall that with single serving BS density, the SINR in an interference-limited cellular network does not get affected by any increase in the BS density which is known as SINR invariance \cite{JGA2011}.  This can  be shown  from \eqref{eq:3.26} by performing a comparative study between $\lambda$ and $\lambda(1+\epsilon)$ with $\epsilon<1$ for some $\lambda$.
 \end{remark}

\begin{remark} \label{remark:impactRs}
It can been observed that the SINR coverage probability increases with an increases connectivity radius $R_\sym$ as it increases the serving power and decreases the interference. 
 \end{remark}

 \subsection{Rate Coverage Probability}
 The rate coverage probability of  a user is defined as the probability that the maximum achievable rate for the considered user is above some threshold $\rho$ \ie 
\begin{align*}
\Rc(\rho)&=\prob{\Rate>\rho}.
\end{align*}
Note that the maximum achievable rate for the typical user is given as 
\begin{align}
    \Rate&=\xi W\log_{2}(1+\SINR) \label{eq:RateShannon}
    \end{align}
  where $\xi$ is some coefficient that denotes the spectrum utilization. $W$ denotes the system bandwidth available to each BBS. 
   Hence, the rate coverage for the typical user is
   \begin{align}
    \Rc(\rho)&=\prob{\Rate>\rho}\nonumber\\
    &=\prob{\xi W\log_{2}(1+\SINR)>\rho}
    \nonumber\\
    &=\prob{\SINR>2^{\rho/(\xi W)}-1}=\Pc(2^{\rho/(\xi W)}-1)\label{eq:3.28}
    \end{align}
 where $\Pc$ is the SINR coverage probability given in \eqref{eq:3.26}.  Note that the available bandwidth $W$ affects $T_\sym$ and hence, $R_\sym$. If the BBSs use orthogonal frequency division multiplexing (OFDM) for transmission with FFT size $N_\sym$, then, $W$ is related to $T_\sym$ as
    $$W=\frac{N_\sym}{T_\sym}.$$
    Hence, the connectivity radius is 
    \begin{align}
    R_\sym=T_\sym\delta c=\frac{N_\sym\delta c}{W}.
    \label{eq:WTsymRelation1}
    \end{align}
    Hence, an increase in the system bandwidth increases the pre-log factor in \eqref{eq:RateShannon}, however, it also decreases the connectivity radius resulting in the lower SINR coverage probability. Therefore, we can observe a trade-off on the rate coverage with increasing bandwidth.
    
\subsection{Numerical Results}   
We now validate our results for SINR and rate coverage probabilities  through numerical simulation. We will also explore the impact of different parameters % like- BBSs density ($\lambda$), the path-loss exponent ($\alpha$) and connectivity region on
on the coverage probabilities via numerical evaluations of derived expressions to develop design insights. %Consider a broadcast scenario in which all the BBSs are Poisson distributed in a region of radius $R$. All t
The default parameters  are given in Table \ref{table:Tabel1} which are according to \cite{ITUP1546,DBTdotPDF}.

\begin{table}[ht!]
\centering
\caption{Default parameters for numerical evaluations}
\label{table:Tabel1}
\begin{tabular}{|p{1in}|p{1.9in}|p{1in}|p{1.9in}|}
\hline
   \textbf{Parameters} & \textbf{Numerical value} & \textbf{Parameters} & \textbf{Numerical value} \\
    \hline
    $R_{\sym}$ & 19.18 km &
    $\lambda$ & 
    0.0014 
    %$1.4\times10^{-3}$ 
	BBSs / $\text{km}^2$\\
    \hline
    $N$&0 &
    Path-loss $a,\ \alpha$& $1.6\times10^{-3}, \ 4$\\
    \hline
   $W$ & 8 MHz&
    $p_\tx$ & 20 dB\\ \hline
    $N_{\sym}\delta$ &512 &
    Coefficients & $\nf=10^{-3},\ \xi=1$\\
    \hline $\Nc$ &15& 
    Simulation radius &  $800$ km\\
	\hline
\end{tabular}
\end{table}

%
%\begin{table}[ht!]
%\centering
%\caption{Default parameters for numerical evaluations}
%\label{table:Tabel1}
%\begin{tabular}{|p{1in}|p{1.5in}|}
%\hline
%   \textbf{Parameters} & \textbf{Numerical value}  \\
%    \hline
%    $R_{\sym}$ & 20 km\\
%    \hline
%    $\lambda$ & 
%    %0.0014 
%    $1.4\times10^{-3}$ BBSs / $\text{km}^2$\\
%    \hline
%    Noise Power $N$&-104 dBm\\
%    \hline
%    $\alpha$& 4\\
%    \hline
%    Bandwidth ($W$) & 8 MHz\\ \hline
%    $p_\tx$ & 30 dBm\\ \hline
%    $N_{\sym}\delta$ &512\\ \hline
%    Coefficients & $\nf=1,\ \xi=1$\\
%    \hline $\Nc$ &15\\ 
%    \hline
%    Simulation radius &  $R=1500$ km\\
%	\hline
%\end{tabular}
%\end{table}

\begin{figure}[ht!]
    \centering

    \includegraphics[width=\plotsize]{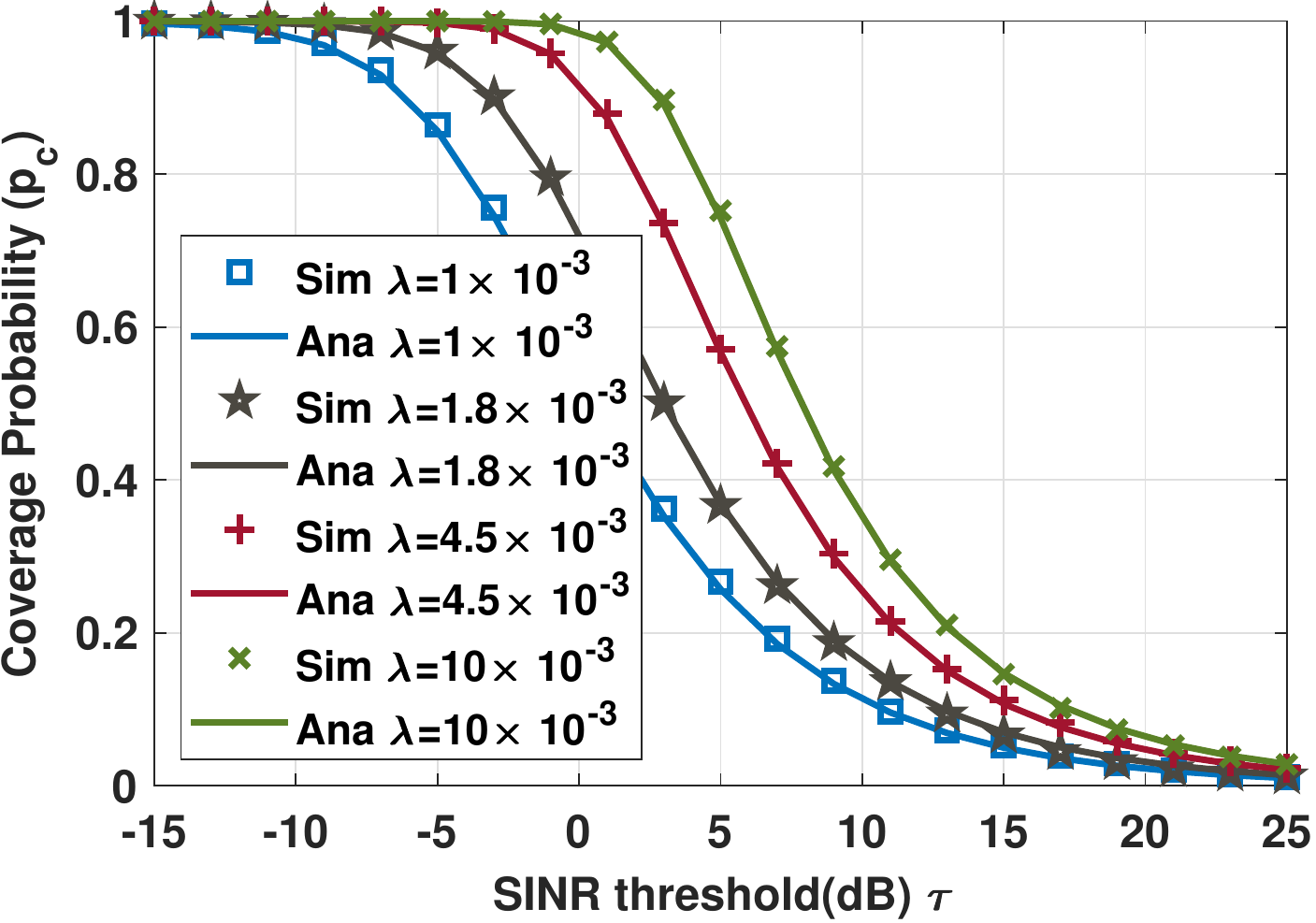}
   \caption{SINR coverage vs. SINR threshold ($\tau$) for various BBS density $\lambda$ in a broadcast system with multiple BBSs. Here, the solid lines represent the
analytical expression and markers represent simulation values. The parameters are according to Table \ref{table:Tabel1}. It can be seen that the analysis matches with simulation results. }
    \label{fig:4.1}
\end{figure}
%\textbf{Impact of BBSs Density:}
\begin{figure}[ht!]
    \centering
    \includegraphics[width=\plotsize]{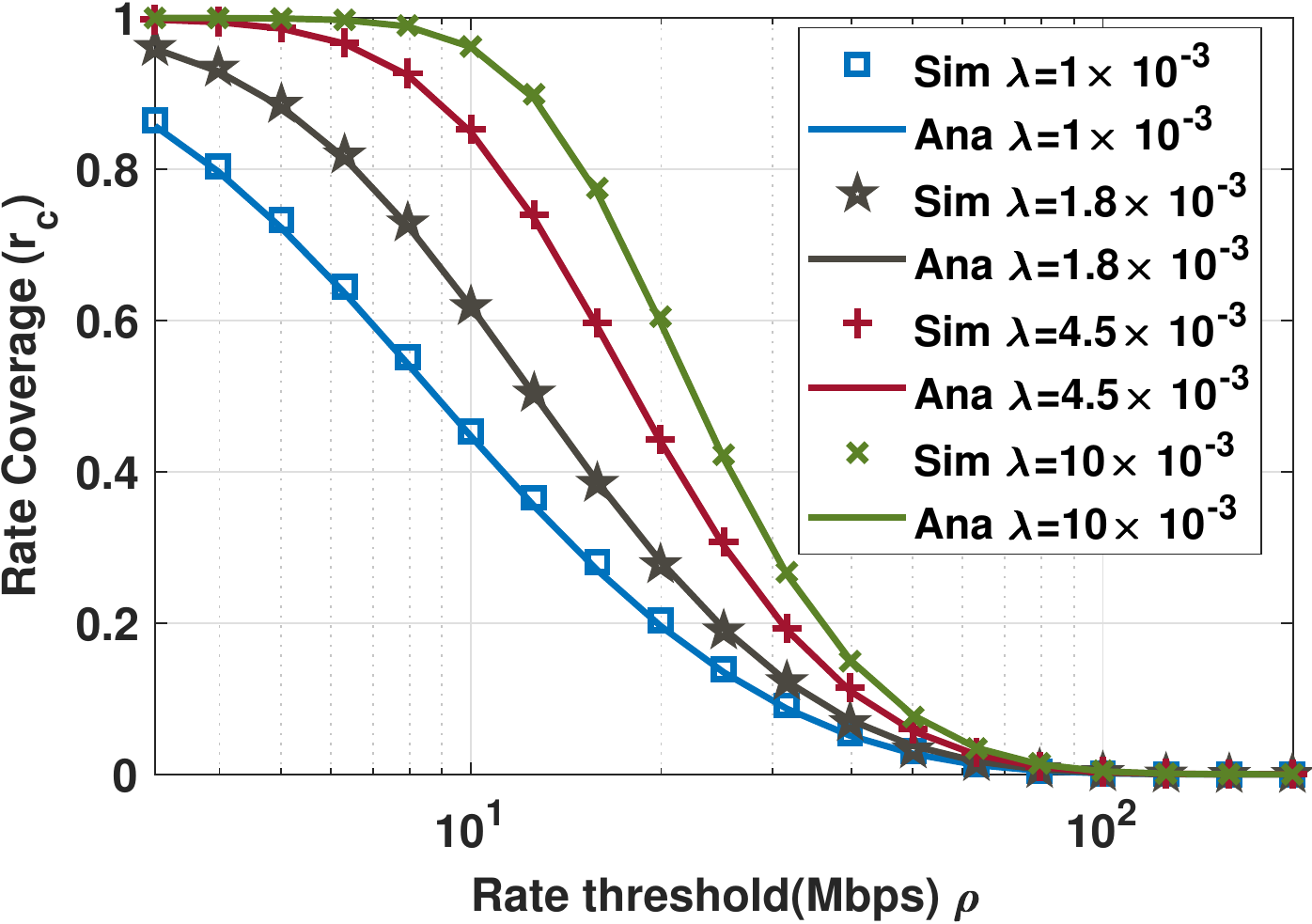}
   \caption{Rate coverage vs. rate threshold ($\rho$) for various BBS density $\lambda$ in a broadcast system with multiple BBSs. Here, the solid lines represent the
analytical expression and markers represent simulation values. The parameters are according to Table \ref{table:Tabel1}. It can be seen that the analysis matches with simulation results.
%Rate coverage vs. rate threshold($\rho$) for different BBS density $\lambda$ (BBS$/\text{km}^2$) with $\alpha=4$ and $R_\sym=20$ km. 
}
    \label{fig:4.2}
\end{figure}

\noindent\textbf{Validation of results:}
Fig. \ref{fig:4.1} shows the SINR coverage probability vs SINR threshold ($\tau$) for different values of BBSs density ($\lambda$). Here, the solid lines represent the analytical expression and markers represent simulation values. It can be seen that the analysis matches with simulation results which establishes the validity of the presented analytical results. From Fig. \ref{fig:4.1}, it can be seen that SINR coverage increases with an increase in the BBS density consistent with Remark \ref{remark:sinrvariance}. %Increase in the BBS density improves both the desired signal power and the interference power but Fig. \ref{fig:4.1} shows that improvement in desired signal power is more as compared to improvement in interference power. This can also be shown analytically from \eqref{eq:3.26} by performing a comparative study between $\lambda$ and $\lambda(1+\epsilon)$ with $\epsilon<1$ for some $\lambda$.
Similarly, Fig. \ref{fig:4.2} shows the rate coverage probability vs rate threshold ($\rho$) for different values of $\lambda$. It can be seen that the rate coverage increases with an increase in $\lambda$ which is expected due to the SINR coverage behavior with $\lambda$.

%\begin{figure}[ht!]
%    \centering
%    \includegraphics[width=8cm]{plusR_diffAlpha}
%   \caption{ 
%	Impact of path-loss exponent $\alpha$ on the SINR coverage vs. SINR threshold ($\tau$) in a 
%	broadcast system with multiple BBSs. Here, parameters are according to 
%	Table \ref{table:Tabel1}.  SINR coverage increases with $\alpha$. 
%	%SINR coverage vs. the path-loss exponent $\alpha$ for different SINR threshold $\tau$ with $\lambda=0.0014 \,\text{BBS/km}^2$.
%	 }
%    \label{fig:4.3}
%\end{figure}
%
%\noindent \textbf{Impact of path-Loss exponent:}
%  The variation of SINR coverage with respect to the path-loss exponent ($\alpha$) for different value of threshold is shown in Fig. \ref{fig:4.3}. From Fig. \ref{fig:4.3}, we can observe that SINR coverage increases with an increase in the path-loss exponent ($\alpha$). The desired signal power and the interference power decreases with an increase in $\alpha$, but the decline in interference power is larger as compared to signal power, resulting in a higher SINR coverage probability.  %As $\alpha$ increases, both the desired signal power and interference power decrease but the decrease in interference power is more than the signal power,
%%so the SINR improves.

\begin{figure}[ht!]
    \centering

    \includegraphics[width=\plotsize]{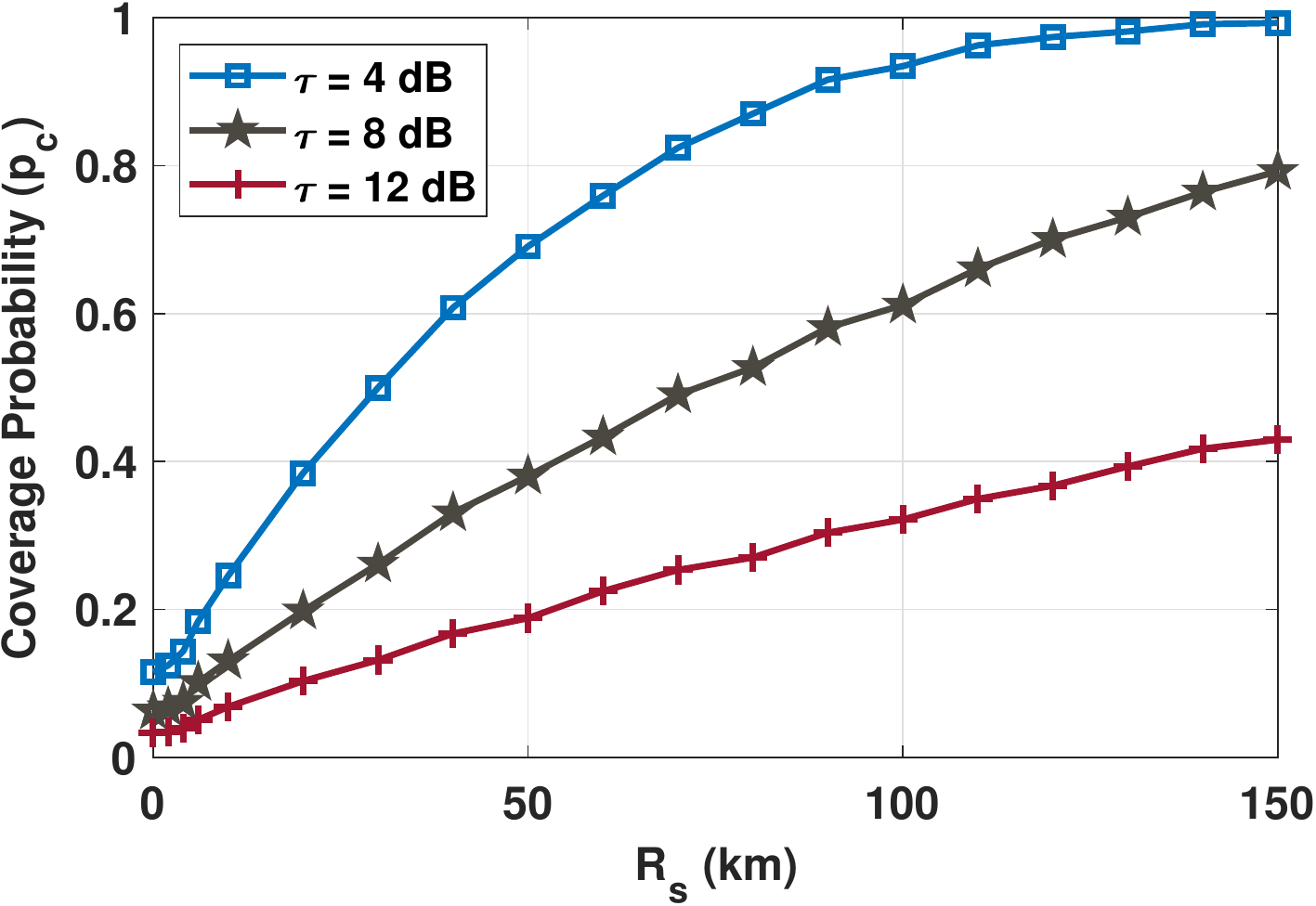}
   \caption{SINR coverage vs. $R_\sym$ for different values of SINR threshold ($\tau$) 
   %with $\alpha=4$ and $\lambda=0.0014 \,\text{BBS/km}^2$ 
	in a broadcast network. Here, bandwidth varies with $R_\sym$ according to \eqref{eq:WTsymRelation1} with maximum value at 80 MHz. The rest of the parameters are according to 
	Table \ref{table:Tabel1}.  It is observed that the SINR improves with $R_\sym$.
 }
    \label{fig:4.4}
\end{figure}

\begin{figure}[ht!]
    \centering

    \includegraphics[width=\plotsize]{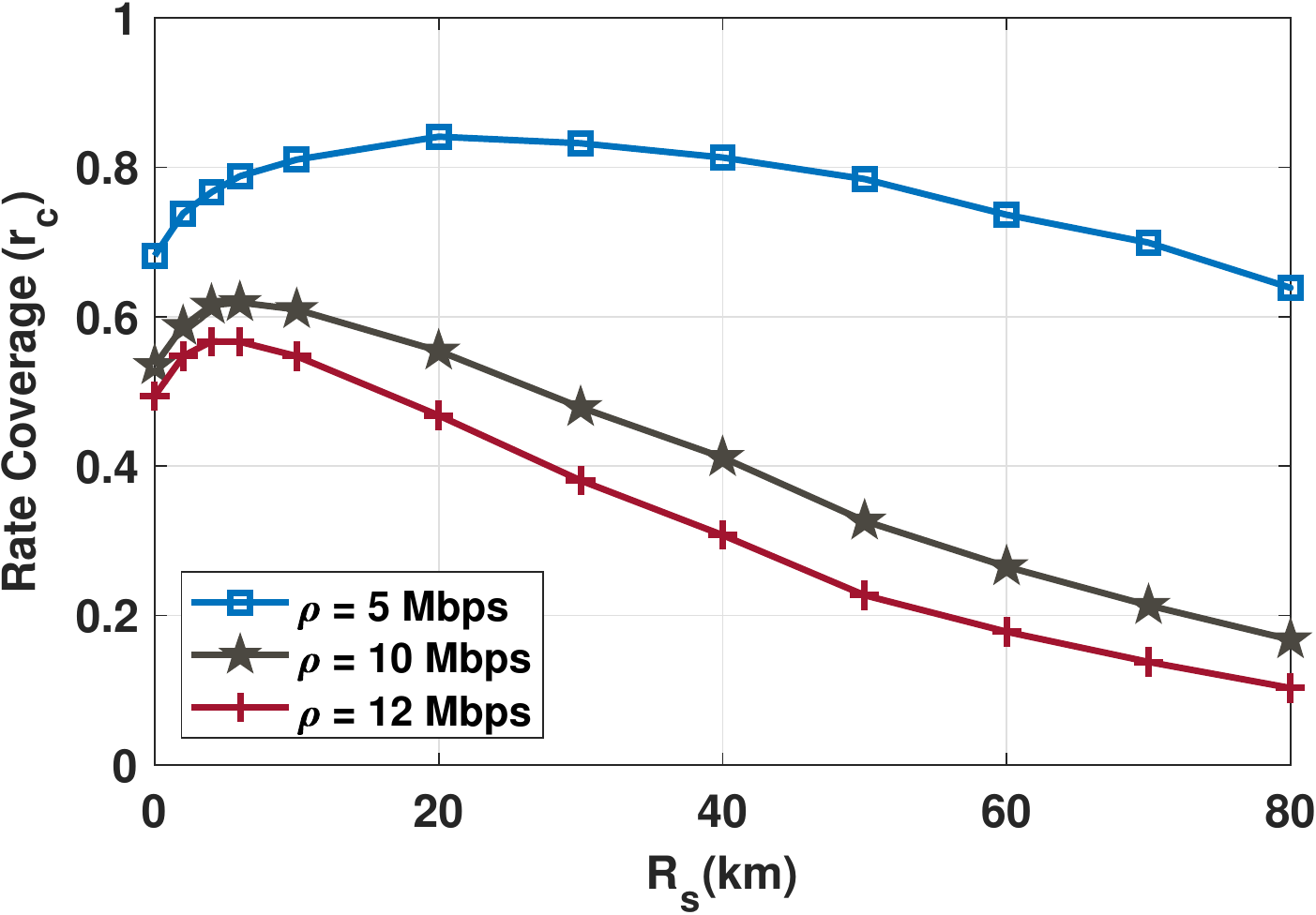}
   \caption{ 
   	Rate coverage vs. $R_\sym$ for different values of rate threshold ($\rho$ in Mbps) 
	in a broadcast network. Here, bandwidth varies with $R_\sym$ according to \eqref{eq:WTsymRelation1} with maximum value at 50 MHz. The rest of the parameters are according to 
	Table \ref{table:Tabel1}.  A trade-off is seen in rate the with
varying $R_\sym$.
   	%Rate coverage vs. $R_\sym$ for various rate threshold $\rho$ (in Mbps) with $\alpha=4$. 
   }
    \label{fig:4.5}
\end{figure}

\noindent\textbf{Impact of connectivity radius on SINR and rate coverage:} 
Fig. \ref{fig:4.4} shows the variation of SINR coverage with  the connectivity radius ($R_{\sym}$) for different values of target SINR threshold. It is observed that the SINR coverage increases with $R_\sym$. It can be justified as $R_\sym$ increases the number of contributing BBSs increases and the number of interfering BBSs decreases. 

 Fig.  \ref{fig:4.5} shows the variation of the rate coverage with $R_\sym$. We can observe that with $R_\sym$,  the rate coverage first %decreases, then 
increases up to a certain value of $R_\sym$ and afterwards starts decreasing again. From \eqref{eq:WTsymRelation1}, increase in $R_\sym$ requires a decrease in the bandwidth $W$ in order to allow larger symbol time. This causes a trade-off in the system performance. As bandwidth is a pre-log factor in the rate expression, it has a negative and larger impact on the rate coverage. Hence, beyond a certain value of $R_\sym$, the impact of decrease in $W$ dominates the increase in SINR caused by increased $R_\sym$ which results in the decrease in the rate.  Due to the same reasons, there may exist an optimal $R_\sym$ that maximizes the rate coverage. The knowledge of optimal $R_\sym$ can be helpful in designing the broadcast network.

\noindent\textbf{Impact of network density on SINR coverage:} 
Fig. \ref{fig:4.6}(a) and (b) show the variation of SINR and rate coverage with  the network density $\lambda$ for different value of target SINR threshold and connectivity radius (which is achieved by changing bandwidth while keeping other parameters the same as Table I). The coverage while ignoring the noise is also shown. From Fig. \ref{fig:4.5}(a), it can be seen that densification of the network helps in SINR coverage. When the BBS density is very small, network is noise limited. As $\lambda$ increases, BBSs comes closer to the user  improving SINR coverage while SIR coverage remains constant. After a certain $\lambda$, the increase in $\lambda$ reduces the interference also. Hence, both SIR and SINR coverage improves. At high value of $\lambda$, coverage becomes 1 as all dominant BBSs provide serving power. The behavior of SIR coverage is similar to as seen in networks with dual-slope pathloss \cite{GupZhaAnd2015}.

\begin{figure}[ht!]
    \centering
    \includegraphics[width=8cm]{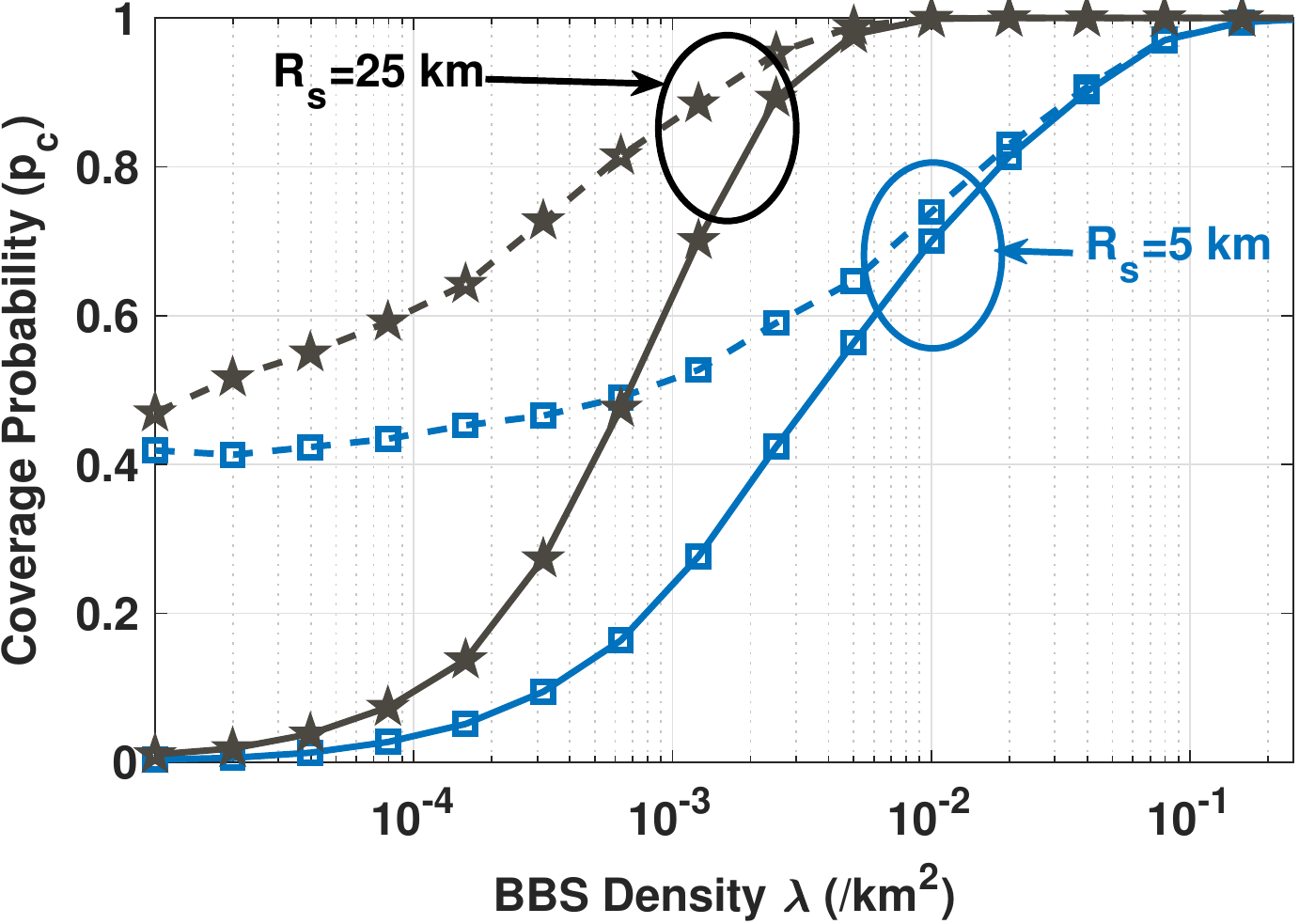}\ 
    \includegraphics[width=8cm]{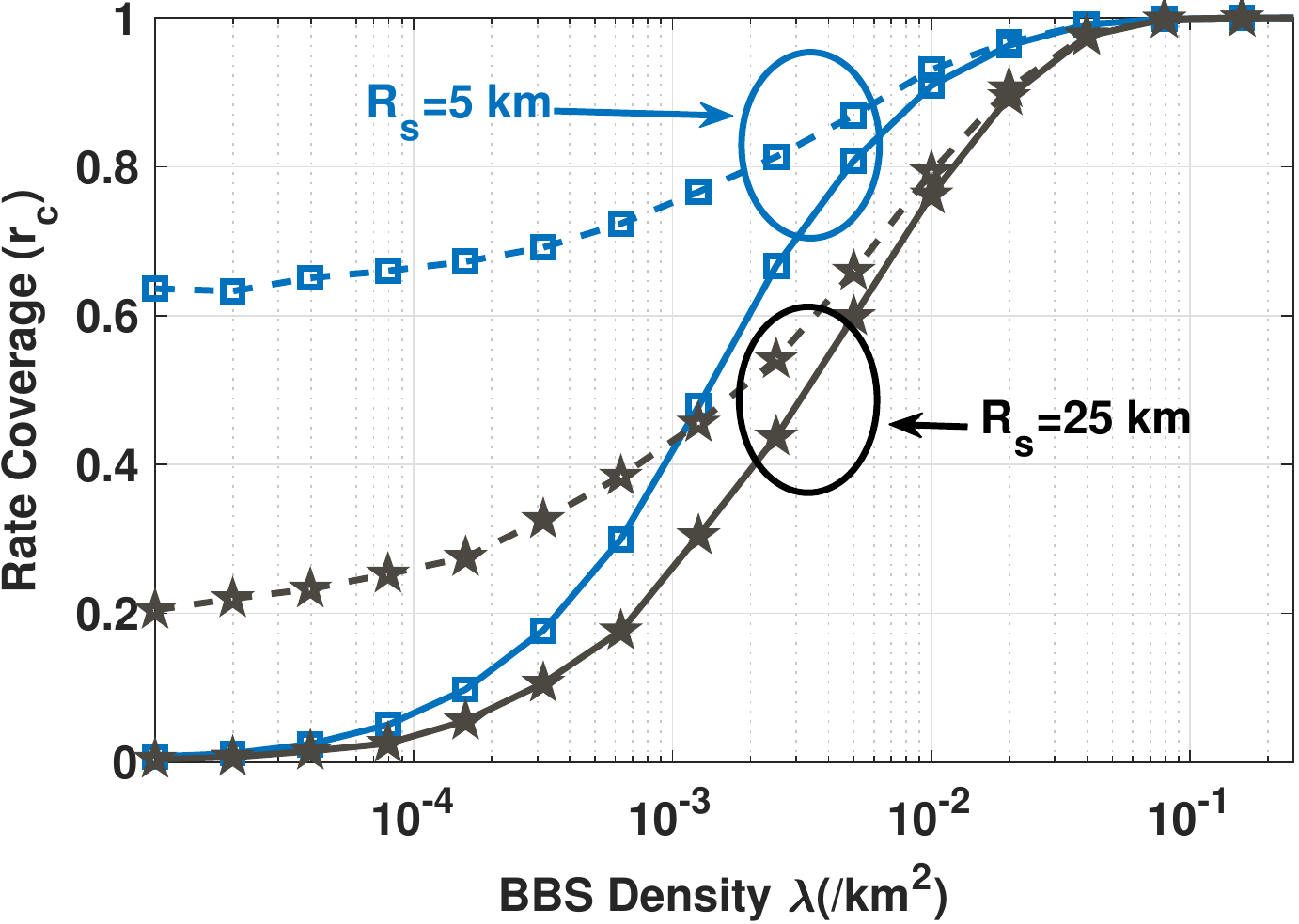}
   \caption{ 
   	Coverage vs. BBS density $\lambda$ for different values of $R_\sym$
	in a broadcast network. Here, rest of the parameters are according to 
	Table \ref{table:Tabel1}. (a)  SINR coverage. Dashed lines indicate SIR coverage while ignoring noise. 
 (b)  Rate coverage. Dashed lines indicate rate coverage  ignoring the  noise. 
   	%Rate coverage vs. $R_\sym$ for various rate threshold $\rho$ (in Mbps) with $\alpha=4$. 
   }
    \label{fig:4.6}
\end{figure}

%%%%%%%%%%%%%%%
 \section{Scenario I: Coverage Analysis for Networks with  Users  having a High Level of Spatial Heterogeneity in Content Preference}\label{sec:sc:1}

 %\subsection{Scenario 1: Users  have a High Level of Spatial Heterogeneity}
We now extend the system model to include networks with users having their individual content or advertisement preferences. 
%We assume that there are are $\Nc$ classes of users. Here, $\Nc$ is termed {\em advertisement granularity}. Each class of users prefers a particular type of content/advertisements. We assume that the users will pay the network only when then can see a particular content of their interest. Each class consists of some quanta of users. For simplicity, we assume that each class has the same number of users, however, the presented framework can be trivially extended to include unequal classes of users. We assume that one unit of revenue comes to the network from a particular class of users if every user of this class can see the content as per the preference of this class. Given the limited resources, the network can cater only to few classes and this capability depends on how these user classes are distributed spatially.  In this section, we will consider two types of users class distributions over the geographical space. 
%We will also present a revenue model to characterize the network's revenue to help us understand optimal scheduling policies.
%In this section, we model the ad-preferences of users from the perspective of stochastic geometry and determine the total revenue. For this, here, we proposed advertisement models for two scenarios and determined the revenue based on them. 
%We present an insight that how the revenue varies with various parameters such as- advertisement granularity, connectivity region, and rate threshold.
% \subsection{Scenario 1: Users  have a High Level of Spatial Heterogeneity}
In this section, we consider the first scenario where there is high level of spatial heterogeneity in users. This means that all classes of users are present in any region. Given the limitation of resources, network selects $n$ classes of users and shows  $n$ contents (one for each class) at  any point of time. Here, $n$ is a design parameter decided by the broadcast network. Since the user classes are spatially inseparable, each BBSs should transmit to the same $n$ contents to improve SINR coverage.  We have assumed OFDM based transmission where a BBS transmits the $n$ number of contents on orthogonal resources.  

\subsection{SINR and SINR Coverage}
To improve coverage, the network can use the same bands for a particular content across all BBSs. Since all the BBSs are transmitting the same data in a band, the SINR for a typical user is the same as given in \eqref{eq:3}. Similarly, in this case, the SINR coverage probability of a typical user  is the same as given in Theorem \ref{thm1}.

\subsection{Rate Coverage}
Now, the available resources are divided among $n$ contents. If the total available bandwidth is $W$,  the bandwidth available for each content is $W/n$. The instantaneous achievable rate for a typical user located at origin, for each content is 
\begin{align}
    \Rate&=\xi\frac{W}{n}\log_{2}(1+\SINR).\label{eq:5.1}
\end{align}
From \eqref{eq:3.28}, the rate coverage probability is given as:
\begin{align}
    \Rc(\rho)
    &=\prob{\SINR>2^{n\rho/ (\xi W)}-1}=\Pc(2^{n\rho/ (\xi W)}-1)\label{eq:5.3}
    \end{align}
 where $\Pc$ is given in \eqref{eq:3.26}.

\subsection{Network Revenue}
Let $\rho$ denote the minimum rate required for a user to be able to view the content. Then, the rate coverage $\Rc$ at  $\rho$ denotes the  fraction of users that are able to view this content. Therefore, $\mathrm{r}_c$ unit of revenue will be earned by the network from a particular class, since only $\Rc$ fraction of users can watch it.
%Now, as $n$ classes can see it, hence the revenue can be $n$ times.
 %\begin{align}
   %\Pc(\tau, \lambda)  &=\frac{1}{2}+\frac{1}{\pi} \int_{0}^{\infty}2\pi\lambda \rcls e^{-\pi\lambda \rcls^2}\int_{0}^\infty\frac{1}{t}\left[\frac{1}{1+t^{2}\rcls ^{-2\alpha}}\right]e^{-2\pi\lambda t^{2/\alpha}M(t,\,\rcls)/2\alpha}\times\nonumber\\&\indent\left[-\sin{\left(2\pi\lambda N(t,\,\rcls)/2\alpha\,+\,t\tau\sigma^2\right)}+t\rcls^{-\alpha}\cos{\left(2\pi\lambda N(t,\,\rcls)/2\alpha\,+\,t\tau\sigma^2\right)}\right]\dd t\dd \rcls%\label{eq:3.26} 
%\end{align}
%where $M(t,\,\rcls)$ and $N(t,\,\rcls)$ is given by \eqref{eq:M1} and \eqref{eq:N1} respectively.% and $\tau=2^{n\rho/ W}-1$.
 Therefore, the network's total revenue $\Rn$ can be given as:
\begin{align}
    %\text{Total Revenue}&=n\times \text{Rate coverage}\nonumber\\ 
    \Rn&=n\mathrm{r}_{c}(\rho).
    \label{eq:5.2}
\end{align}

\begin{figure}[ht!]
    \centering

    \includegraphics[width=\plotsize]{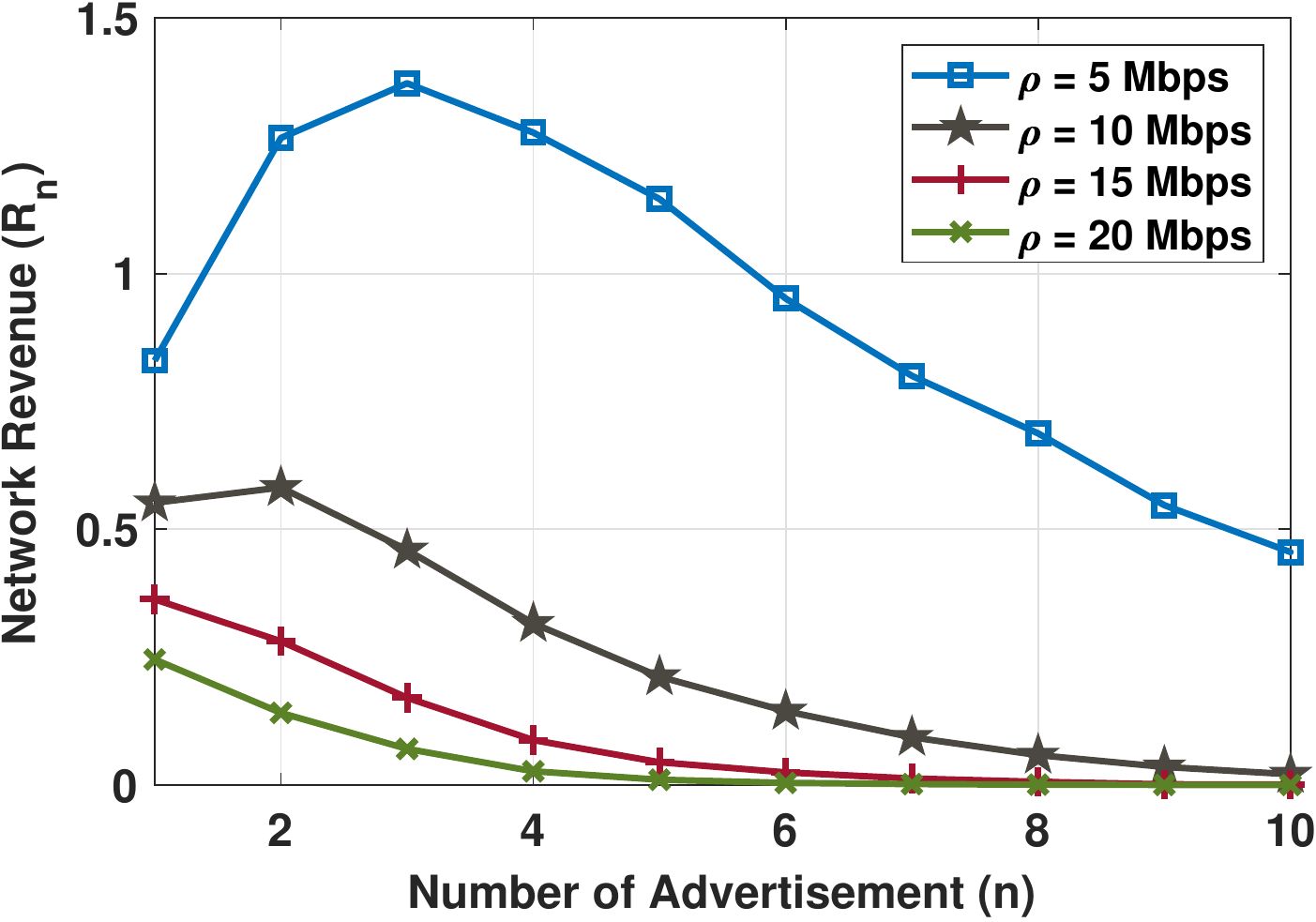}\ 
   \caption{Variation of total revenue with respect to allowed number of user classes for different rate threshold $\rho$ (Mbps) (which is a proxy for content quality requested). Content granularity is  $\Nc=15$. Other parameters are according to Table \ref{table:Tabel1}. It is observed that an optimal value of $n$ can provide the maximum revenue to the network which depends on the content quality requested. }
    \label{fig:5.1}
\end{figure}
\subsection{Numerical Results: Impact of $n$ on Total Revenue}

Fig.  \ref{fig:5.1} shows the variation of the network revenue $\Rn$ with $n$ for a system with $\Nc=15$. We can observe that initially, the revenue increases with an increase in $n$ up to a certain value, and thereafter, starts decreasing. This  can be justified in the following way. If $n$ increases, the following two effects take place-- (1) more user classes are served, causing a linear increase in $\Rn$, and (2) available bandwidth $W/n$ for each content/advertisement decreases causing the rate coverage to drop. As a combined effect dictated by \eqref{eq:5.2}, the revenue is optimal at a particular value of $n$. % the revenue is proportional to $n$ but from \eqref{eq:5.1} rate is inversely proportional to $n$, therefore, it has a negative impact on rate coverage.
However, this behavior also depends on the target rate threshold. At a higher rate threshold, $\Rn$ decreases with $n$, showing $n=1$ as the optimal choice. This implies that the optimal number of user classes that can be served, depends on the quality of the content. If the quality requested is high, then it may be better to serve less user classes, while more user classes can be served when the quality requested is lower.

% \section{Coverage Analysis for Networks with Users Having Advertisement/Content Preference and Network Revenue: Scenario 2: User Classes are Spatially Separated}
 
 \section{Scenario II: Coverage Analysis for Networks with  Users  having  Spatially Separated Classes for Content Preference}\label{sec:sc:2}
%\subsection{Scenario 2: User Classes are Spatially Separated}
The second scenario we consider corresponds to the case where  user classes are spatially separated. 
For tractability, we assume that coverage area of each BBS comprises of users from a single class. Thus, users of different classes are spatially separated. Let $S_i$ denote the class of the $i$th BBS located at $\X_i$ which is a uniform discrete random variable  with PMF given as
\begin{align}
p_{S_i}(k)&=\frac1{\Nc}\indside{1\le k\le\Nc}.
\end{align}
We assume that $S_i$'s are independent of each other.
%
%
%This  means that complete region is divided into zones with each zone is occupied with one type of users. % (see Fig. \ref{fig:my_label}). 
%Each zone is served by one BBS. Zones are randomly distributed, hence each BS can have any type of users out of $\Nc$ types. Now we cater to only $n$ types of users. Note that $n$ again is a design parameter and termed as advertisement granularity.
% \begin{figure}
%        \centering
%        \includegraphics[width=5cm]{ad2 (1).pdf}
%        \caption{Illustration of spatially separated user zones in the service region of the broadcast network. Where $\dist{\X_0}$ is the distance of the tagged BS of the typical user zone. }
%        \label{fig:my_label}
%    \end{figure}

The network can cater to all user classes by letting each BBS to transmit content according to the class of users lying in its coverage area. Note that for a typical user, only those BBSs that transmit the same content, can contribute to the serving signal power at this user. Therefore, this strategy will reduce the number of serving BSs and hence reduce the SINR. On the other hand, network can decide to cater only one user class by forcing all BBSs to transmit only one content, will reduce the revenue as users of only one class will receive their preferred content. It will be interesting to find the optimal number of user-classes that can be catered by the network. 
As a general problem, we consider that the network decides to cater $n$ user classes out of total $\Nc$ classes. Let us denote the set of all selected index by $\selset$.

Let us denote the content transmitted by $i$th BS by $M_i$. If the BBS's user class is one of the $n$ selected classes (\ie $S_i\in \selset$),  it will transmit the content corresponding to its user class \ie $M_i=S_i$. If the BBS's user class is not one of the selected classes, it will transmit the content corresponding to a randomly selected user class to help boost its signal strength.

%Consider the system model of a broadcast network described in Chapter \ref{Ch:2}.
%Consider a broadcast network in a 2D region of ball of radius $R$ with centre at origin \ie $\set{B}=\Ball(\origin,R)$. The location of BBSs can be modeled as a homogeneous PPP (Poisson point process) $\Phi=\{\X_{i}\in \mathbb{R}^2\}$ with density $\lambda$ \cite{ahu13} in the region $\set{B}$.

Let us consider a typical user at $\origin$. Without loss of generality, assume that its user class is 1. The probability that its class is one of the $n$ selected classes is $n/\Nc$. Let us condition on the fact that it is one of the selected classes. 

%Let the location of serving BBS of typical user zone is $\X_0$ which is the closest BS for typical user and can be termed as tagged BS. Now, the probability that the typical user zone is one of the $n$ selected zones is $n/\Nc$.
 Let the tagged BS of this typical user transmits the content $M_0=1$. Then, for the $i$th BBS, $M_i=M_0$ if
 \begin{enumerate}
 \item $S_i=M_0$ which occurs with probability $\frac1{\Nc}$, or
 \item $S_i \notin \selset$ and $S_i=M_i$ which occurs with probability $\frac{\Nc-n}{\Nc}\cdot \frac1{n}$.
 \end{enumerate}
 %
  %then any BS transmit the same ad either its zone is same as closest BS zone or its zone is not one of $n$ selected zones, then it will pick one randomly out of $n$ selected zones. 
  Therefore, the probability that the $i$th BBS is transmitting the content as per the preference of this typical user is %$p=1/n$
\begin{align}
    p&=\prob{M_i=M_0}=\frac{1}{\Nc}+\frac{\Nc-n}{\Nc}\cdot\frac{1}{n}=\frac{1}{n}.\label{eq:sameprob}
    \end{align}
  %  where, $\frac{1}{\Nc}$ is the probability that BS zone is same as closest BS, $\frac{\Nc-n}{\Nc}$ is the probability that BS zone is not one out of $n$ selected zones, and $\frac{1}{n}$ is the  probability that it randomly picks one out of $n$ selected zones.

\subsection{SINR}
Now, note that the BBSs that are transmitting the same content as $M_0$ and are located inside $\Ball(\origin,{X_0}+R_\sym)$ will contribute to the desired signal power to the typical user at origin $\origin$. % The probability that any BS inside $\Ball(\origin,{\rcls}+R_\sym)$ will transmit a different ad is (1-p).
Therefore, the desired signal power for the typical user at origin is given as
 \begin{align}
     \Prx=&p_\tx \nf \beta_{0} {{X_{0}}}^{-\alpha}+
	 \iftoggle{SingleColumn}{}{\nonumber\\
	&}
	\sum_{\X_{i}\in\Phi\cap \Ball(0,X_0+R_{\sym})\setminus \X_0} p_\tx \nf \beta_{i} {{X_{i}}}^{-\alpha}\indside{M_i=M_0}.\label{eq:5.4}
  \end{align}
%  where $U_i\sim U(0,1)$  and $p$ is the probability that BSs are transmitting the same ad $A$.
Similarly, the interference for the typical user is caused by the BBSs that are either located outside $\Ball(\origin,{\rcls}+R_\sym)$ or located inside the $\Ball(\origin,{X_0}+R_\sym)$ but transmitting a different content than $M_0$. Hence, the total interference is given as 
\begin{align}
I =&\sum_{\X_{i}\in\Phi\cap \Ball(0,X_0+R_{\sym})\setminus \X_0} p_\tx \nf \beta_{i}{{X_{i}}}^{-\alpha}\indside{M_i\ne M_0}
 \iftoggle{SingleColumn}{}{\nonumber\\&}+\sum_{\X_{j}\in\Phi\cap \complementT{\Ball(0,{X_0}+R_{\sym})}}p_\tx \nf \beta_{j}{{X_{j}}}^{-\alpha}.\label{eq:5.5}
 \end{align}
 Now, the SINR for this user is given as 
   \begin{align}
     &\SINR=\frac{\Prx}{I+N}\nonumber\\
     &=\frac{ \beta_{0} {\dist{\X_{0}}}^{-\alpha}+\sum_{\X_{i}\in\Phi\cap \Ball(0,\rcls+R_{\sym})\setminus \X_0} \beta_{i} {\dist{\X_{i}}}^{-\alpha}\indside{M_i=M_0}}{\sum_{\X_{i}\in\Phi\cap \Ball(0,\rcls+R_{\sym})\setminus \X_0}\beta_{i}{\dist{\X_{i}}}^{-\alpha}\indside{M_i\ne M_0}+\sum_{\X_{j}\in\Phi\cap \complementT{\Ball(0,\rcls+R_{\sym})}} \beta_{j}{\dist{\X_{j}}}^{-\alpha}+\sigma^2}.
    \end{align}
    Here, $\sigma^2$ is the normalized noise power which is given as $\sigma^2=N/(p_\tx \nf)$ where $N$ is the noise power.
    
\subsection{SINR Coverage Probability}
We now calculate the SINR coverage for a typical user. % when the advertisement is shown. The method to calculate the coverage probability is
Similar to Section \ref{sinr_calc}, the SINR coverage probability is given as
 \begin{align}
   \Pc(\tau, \lambda) &=\frac{1}{2}+\frac{1}{\pi} \int_{0}^{\infty}\int_{0}^\infty2\pi\lambda \rcls e^{-\pi\lambda \rcls^2} \frac{1}{t}
   \iftoggle{SingleColumn}{}{\nonumber\\&\ \ \ \times} \Im{\laplace{I\given}(jt\tau) e^{-jt\tau\sigma^2}\laplace{\Prx\given}(-jt)}\dd t\  \dd \rcls  \label{eq:cov_prob_2}
   \end{align}
   where $\Prx$ and $I$ are given in \eqref{eq:5.4} and \eqref{eq:5.5}.
\begin{lemma}\label{lemma2}
Conditioned on the location of the closest serving BBS, the Laplace transforms of the desired signal  power and the sum interference  at the receiver are given as
\begin{align}
&\mathcal{L}_{\Prx\given}(s)
 \iftoggle{SingleColumn}{}{\nonumber\\
    &}
    =\frac{1}{1+s\rcls ^{-\alpha}}\exp\left(-2\pi\lambda p \int_{X_0}^{X_0+R_{\sym}}\frac{sr^{-\alpha}}{1+sr^{-\alpha}}r\dd r \right)\label{eq:5.22}\\
&\laplace{I|\X_0}(s) 
=\exp\left(-2\pi\lambda(1-p)\int_{X_0}^{X_0+R_{\sym}}\frac{sr^{-\alpha}}{1+sr^{-\alpha}}r\dd r
 \iftoggle{SingleColumn}{}{\right.\nonumber\\&\hspace{25mm}\left.}
 -2\pi\lambda\int_{X_0+R_{\sym}}^{\RLimit}\frac{sr^{-\alpha}}{1+sr^{-\alpha}}r\dd r \right).\label{eq:5.17}
    \end{align}

%
%\begin{align}
%    &\laplace{\Prx\given}(-jt)%&=\frac{1}{1-jt\rcls ^{-\alpha}}\exp\left(-2\pi\lambda p \int_{\rcls}^{\rcls+R_{\sym}}\frac{-jtr^{-\alpha}}{1-jtr^{-\alpha}}rdr \right)\label{eq:5.21}
%    \nonumber\\
%    &=\frac{1+jt\rcls ^{-\alpha}}{1+t^{2}\rcls ^{-2\alpha}}\exp\left(-2\pi\lambda p \int_{\rcls}^{\rcls+R_{\sym}}\frac{-jtr^{-\alpha}}{1-jtr^{-\alpha}}rdr \right)\label{eq:5.22}
%\end{align}
%\begin{align}
%&\laplace{I}(jt\tau)=\exp\left(-2\pi\lambda\left[(1-p)\int_{\rcls}^{\rcls+R_{\sym}}\frac{jt\tau r^{-\alpha}}{1+jt\tau r^{-\alpha}}r\,\dd r\right.\right.\nonumber\\&\hspace{20mm}\left.\left.+ \int_{\rcls+R_{\sym}}^{R}\frac{jt\tau r^{-\alpha}}{1+jt\tau r^{-\alpha}}r\,\dd r\right]\right)\label{eq:5.17}
%\end{align}
\end{lemma}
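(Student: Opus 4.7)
The plan is to mirror the derivation of Lemma~\ref{lemma:1}, with the additional ingredient that the content indicator $\indside{M_i=M_0}$ acts as an independent thinning of the PPP $\Phi$ restricted to the connectivity ball. First I would condition on $\X_0$ and use the fact that, by Slivnyak's theorem together with the independence of the content marks $\{S_i\}$ and $\{M_i\}$, the remaining points of $\Phi$ on $\twospace\setminus\{\X_0\}$ still form a homogeneous PPP of intensity $\lambda$, independently marked with the Bernoulli variable $\indside{M_i=M_0}$ which, by \eqref{eq:sameprob}, equals $1$ with probability $p=1/n$.

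For $\laplace{\Prx\given}(s)$, I would split $\Prx$ in \eqref{eq:5.4} into the contribution from $\X_0$ and the sum over $\Phi\cap\Ball(\origin,X_0+R_\sym)\setminus\X_0$. The serving term $\beta_0 X_0^{-\alpha}$ contributes $\E[e^{-s\beta_0 X_0^{-\alpha}}]=1/(1+sX_0^{-\alpha})$ by the Rayleigh assumption. For the sum term, I would apply the PGFL of the PPP restricted to the annulus $\Ball(\origin,X_0+R_\sym)\setminus\Ball(\origin,X_0)$, after averaging jointly over the Rayleigh fades $\beta_i$ and the Bernoulli marks. The per-point factor is
\begin{align*}
\E_{\beta,M}\!\left[\exp\!\left(-s\beta r^{-\alpha}\indside{M_i=M_0}\right)\right]
=p\cdot\frac{1}{1+sr^{-\alpha}}+(1-p),
\end{align*}
so the PGFL produces an exponential of $-2\pi\lambda\int_{X_0}^{X_0+R_\sym}\!\bigl(1-p\cdot\frac{1}{1+sr^{-\alpha}}-(1-p)\bigr)r\,\dd r$, which simplifies to the expression in \eqref{eq:5.22} after using $1-\frac{1}{1+sr^{-\alpha}}=\frac{sr^{-\alpha}}{1+sr^{-\alpha}}$.

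For $\laplace{I\given}(s)$, I would separately handle the two disjoint contributions to $I$ in \eqref{eq:5.5}: the inside-ball interferers (marked $M_i\ne M_0$, which by \eqref{eq:sameprob} happens with probability $1-p$) and the outside-ball interferers (no marking restriction). By independent thinning these two PPPs are independent, so the Laplace transform factorizes. For the inside-ball part, the per-point PGFL factor is $(1-p)\cdot\frac{1}{1+sr^{-\alpha}}+p$, yielding the first exponential in \eqref{eq:5.17}. For the outside-ball part, every point contributes and the standard PGFL computation with Rayleigh fading yields the second exponential, exactly as in \eqref{eq:3.8} of Lemma~\ref{lemma:1}.

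The main obstacle is not any computation --- both PGFL integrals are standard --- but rather justifying cleanly that on $\Phi\setminus\{\X_0\}$ the content marks behave as independent Bernoulli$(p)$ thinnings with the value of $p$ in \eqref{eq:sameprob}, and that the resulting thinned processes on the inside- and outside-ball regions are mutually independent. Once this marking/thinning argument is stated precisely (invoking Slivnyak's theorem and the independence of $\{S_i\}$ from $\Phi$), the Laplace-transform evaluations are routine and yield \eqref{eq:5.22}–\eqref{eq:5.17}.
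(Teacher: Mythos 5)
Your proposal is correct and follows essentially the same route as the paper's proof: Slivnyak's theorem, the PGFL of the PPP on the annulus and on the exterior region, Bernoulli marking with success probability $p$ from \eqref{eq:sameprob}, and the exponential MGF of the Rayleigh fades. The only cosmetic difference is that you average the mark inside the per-point PGFL functional, whereas the paper first invokes the independent thinning theorem to replace the marked points by PPPs of densities $\lambda p$ and $\lambda(1-p)$ and then applies the PGFL; the two are equivalent.
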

\begin{proof}
See Appendix \ref{proof2}.
\end{proof}
Using Lemma \ref{lemma2} and \eqref{eq:cov_prob_2} we can calculate the SINR coverage  which is given in Theorem \ref{thm2}.
\begin{theorem}\label{thm2}
  The probability of SINR coverage for the user located at the origin in a broadcast network with $\lambda$ density of BBSs, is given as:
\begin{align}
\Pc(\tau,\lambda)=&
	\frac{1}{2}+\frac{1}{\pi} \int_{0}^{\infty}2\pi\lambda \rcls e^{-\pi\lambda \rcls^2}\int_{0}^\infty\frac{1}{t}
\Im{
{\color{black}\frac{e^{-jt\tau\sigma^{2}}}{1-jt\rcls ^{-\alpha}}}
\right.\nonumber\\
&\indent\times \left.\exp
\left(-2\pi\lambda\left( p\int_{\rcls}^{\rcls+R_{\sym}}\frac{-jtr^{-\alpha}}{1-jtr^{-\alpha}}r\,\dd r
\iftoggle{SingleColumn}{}{\right.\right.\right.\nonumber\\&\hspace{10mm}\left.\left.\left.}
+(1-p)\int_{\rcls}^{\rcls+R_{\sym}}\frac{jt\tau r^{-\alpha}}{1+jt\tau r^{-\alpha}}r\,\dd r
\right.\right.\right.\nonumber\\&\hspace{15mm}\left.\left.\left.
+\int_{\rcls+R_{\sym}}^{\infty}\frac{jt\tau r^{-\alpha}}{1+jt\tau r^{-\alpha}}r\,\dd r\right)\right)}
\dd t \dd \rcls\nonumber \\
    %=\frac{1}{2}+\frac{1}{\pi}\int_{0}^{\infty}2\pi\lambda \rcls e^{-\pi\lambda \rcls^2}\int_{0}^{\infty}\frac{1}{t}\Im{e^{-jt\tau\sigma^2}\frac{1+jt\rcls^{-\alpha}}{1+t^{2}\rcls ^{-2\alpha}}e^{-2\pi\lambda t^{2/\alpha}\left[M(t,\,\rcls)+jN\right]/2\alpha}}\dd t \dd \rcls\nonumber\\
%=&\frac{1}{2}+\frac{1}{\pi} \int_{0}^{\infty}\int_{0}^\infty2\pi\lambda \rcls e^{-\pi\lambda \rcls^2}\cdot\frac{1}{t}
%\cdot\left[\frac{1}{1+t^{2}\rcls ^{-2\alpha}}\right]\cdot
%e^{-
%\frac\pi\alpha
%\lambda t^{2/\alpha}M'(t,\,\rcls)}
%\nonumber\\&%\hspace{-12mm}
%\times \left[t\rcls^{-\alpha}\cos{\left(\frac\pi\alpha\lambda {t^{2/\alpha}} N'(t,\rcls)+\,t\tau\sigma^2\right)}
% \iftoggle{SingleColumn}{}{\right.\nonumber\\&\hspace{-12mm}\left.}
%-\sin{\left(\frac\pi\alpha\lambda {t^{2/\alpha}} N'(t,\rcls)+\,t\tau\sigma^2\right)}\right]
%\dd t\dd \rcls.\\
=&\frac{1}{2}+\frac{1}{\pi} \int_{0}^{\infty}\int_{0}^\infty
 	2v\frac1s
		\left[
		\frac{1}{1+s^2 v^{-2\alpha}}\right]
       e^{-v^2} e^{-s^{\frac2\alpha} M'_d(s,v)}
		\nonumber\\
		&     \times   \left[
		s v^{-\alpha}
		\cos\left(s^{\frac2\alpha} N'_d(s,v)+\tau s K\right)
        -\sin\left(s^{\frac2\alpha} N'_d(s,v)+\tau s K\right)
		\right]
\dd v\;\dd s
\label{eq:5.33} 
\end{align}
%
%$$\bbfunc{z,a,b}= \betafunc{\frac{1}{1+a}}{z}{-z+1}- \betafunc{\frac{1}{1+b}}{z}{-z+1}.$$
%
 where $M'_d(t,\,\rcls)$ and $N'_d(t,\,\rcls)$ is given as
\begin{align}
 M_d(s,v)=& \frac1\alpha\left[
		p\bbfunc{
		    						\frac1\alpha,
		s^2(v+\msqrt)^{-2\alpha},
		s^2v^{-2\alpha}
		}+
		       (1-p)\tau ^{2/\alpha}
		\bbfunc{
		    						\frac1\alpha,
		\tau^2s^2(v+\msqrt)^{-2\alpha},
		\tau^2s^2v^{-2\alpha}
		}\right.\nonumber\\&\left.
		+
		       \tau ^{2/\alpha}
		\bbfunc{
		    						\frac1\alpha,
		0,
		\tau^2 s^2(v+\msqrt)^{-2\alpha}
		}\right]\label{thm2:mddef}
		\\   
 N_d(s,v)=&\frac1\alpha\left[
		-p\bbfunc{
		    						\frac1\alpha+\frac12,
		s^2(v+\msqrt)^{-2\alpha},
		s^2v^{-2\alpha}
		}\right.\nonumber\\
		&\left.+
		       (1-p)\tau ^{2/\alpha}
		\bbfunc{
		    						\frac1\alpha+\frac12,
		\tau^2s^2(v+\msqrt)^{-2\alpha},
		\tau^2s^2v^{-2\alpha}
		}\right.\nonumber\\&\left.
		+
		       \tau ^{2/\alpha}
		\bbfunc{
		    						\frac1\alpha+\frac12,
		0,
		\tau^2 s^2(v+\msqrt)^{-2\alpha}
		}\right]\label{thm2:nddef}
%		\\   
% N_d(s,v)&= \frac1\alpha
%		\left[
%			-\bbfunc{
%    						\frac1\alpha+\frac12,
%			s^2(v+\msqrt)^{-2\alpha},
%			s^2 v^{-2\alpha}
%			}
%			+\tau^{\frac2\alpha}
%			\bbfunc{
%    						\frac1\alpha+\frac12,
%						0,
%						\tau^2 s ^2 (v+\msqrt)^{-2\alpha}
%					}
%		\right]\label{thm1:nddef}\\
%  						\text{with } &
%\bbfunc{z,a,b}=\iftoggle{SingleColumn}{}{\nonumber\\&} \betafunc{\frac{1}{1+a}}{z}{-z+1}- \betafunc{\frac{1}{1+b}}{z}{-z+1}.
\end{align}
\end{theorem}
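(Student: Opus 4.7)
The plan is to mirror the derivation used to prove Theorem \ref{thm1}, inserting the conditional Laplace transforms from Lemma \ref{lemma2} into the Gil-Pelaez expression \eqref{eq:cov_prob_2}, and then reducing the resulting integrals over $r$ to the $\bbfunc{\cdot,\cdot,\cdot}$ form via the same change of variables that appears in the appendix for Theorem \ref{thm1}. The novelty relative to Theorem \ref{thm1} is purely bookkeeping: inside $\Ball(\origin,X_0+R_\sym)$ the point process is now thinned into two pieces, a $p$-thinned set contributing to $\Prx$ and a $(1-p)$-thinned set contributing to $I$, while outside the connectivity region every BBS still interferes. So the plan is to track these three integrals separately and show they combine into the claimed $M'_d$ and $N'_d$.

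\textbf{Step 1:} I would substitute \eqref{eq:5.22} and \eqref{eq:5.17} into \eqref{eq:cov_prob_2}. Evaluating $\laplace{\Prx|\X_0}(-jt)$ at $s=-jt$ gives the factor $1/(1-jt\rcls^{-\alpha})$ together with $\exp\bigl(-2\pi\lambda p \int_{\rcls}^{\rcls+R_\sym}\tfrac{-jt r^{-\alpha}}{1-jt r^{-\alpha}}r\,\dd r\bigr)$, while $\laplace{I|\X_0}(jt\tau)$ produces two exponentials: one with integrand $(1-p)\tfrac{jt\tau r^{-\alpha}}{1+jt\tau r^{-\alpha}}r$ on $[\rcls,\rcls+R_\sym]$ and one with integrand $\tfrac{jt\tau r^{-\alpha}}{1+jt\tau r^{-\alpha}}r$ on $[\rcls+R_\sym,\infty)$. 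This directly yields the first equality in \eqref{eq:5.33}.

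\textbf{Step 2:} To obtain the second equality I would perform the same change of variables as in the proof of Theorem \ref{thm1}: let $v=\rcls\sqrt{\pi\lambda}$ to absorb the Rayleigh density of $X_0$ into $2v e^{-v^2}\,\dd v$, and use an appropriate rescaling $s \leftrightarrow t$ so that the prefactor $1/(1+s^2 v^{-2\alpha})$ and the $e^{-v^2}$ factor appear as in \eqref{eq:5.33}. For each of the three inner integrals, I would apply the substitution $u = 1/(1+ c r^{-\alpha})$ (with $c$ equal to $jt$, $jt\tau$, or their complex-conjugate counterparts) that converts $\int \tfrac{c r^{-\alpha}}{1+c r^{-\alpha}}r\,\dd r$ into a difference of incomplete Beta functions of the form $\betafunc{\cdot}{1/\alpha}{1-1/\alpha}$. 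Grouping these gives the three terms that define $M'_d$ and $N'_d$ in \eqref{thm2:mddef}--\eqref{thm2:nddef}: the $p$-term from the serving-thinned points in the connectivity ring, the $(1-p)\tau^{2/\alpha}$-term from the interfering-thinned points in the connectivity ring, and the plain $\tau^{2/\alpha}$-term from the far interference on $[\rcls+R_\sym,\infty)$.

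\textbf{Step 3:} Finally, I would split the complex exponential $\exp(-s^{2/\alpha}(M'_d - j N'_d) - j\tau s K)$ into its modulus $e^{-s^{2/\alpha} M'_d}$ times a phase, then take the imaginary part via $\Im{e^{-jx}}=-\sin x$ and $\Im{\text{(complex prefactor)}\cdot e^{-jx}}$ expansion. Combining the $1/(1+s^2 v^{-2\alpha})$ factor with the $s v^{-\alpha}$ term coming from the real part of $1/(1-jt\rcls^{-\alpha})$ gives the bracketed $[s v^{-\alpha}\cos(\cdot) - \sin(\cdot)]$ expression in \eqref{eq:5.33}.

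\textbf{Main obstacle.} None of the steps are conceptually new, but the bookkeeping is the tricky part: when the ring $[\rcls,\rcls+R_\sym]$ contributes \emph{both} to the signal Laplace transform (evaluated at $-jt$) and to the interference Laplace transform (evaluated at $jt\tau$), one must be careful to keep the complex conjugation straight so that the real part collects into $M'_d$ and the imaginary part into $N'_d$ with the correct signs of $p$ and $(1-p)$. Once this is done, verifying that the $p\to 1$ specialization recovers Theorem \ref{thm1} (the $(1-p)$ term vanishes and the $p$ term reduces to the single $\bbfunc{1/\alpha,\,s^2(v+\msqrt)^{-2\alpha},\,s^2 v^{-2\alpha}}$ contribution in \eqref{thm1:mddef}) provides a useful consistency check.
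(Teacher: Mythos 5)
Your plan follows the paper's own proof of Theorem \ref{thm2} essentially step for step: substitute the Lemma \ref{lemma2} transforms evaluated at $-jt$ and $jt\tau$ into the Gil-Pelaez expression, separate real and imaginary parts by conjugate multiplication so the three ring/exterior integrals collect into $M'_d$ and $N'_d$ as incomplete Beta functions, and finish with the scaling $t\rightarrow s/(\lambda\pi)^{\alpha/2}$, $\rcls\rightarrow v/\sqrt{\lambda\pi}$. The bookkeeping issue you flag (tracking the $p$ and $(1-p)$ pieces through the conjugation) is exactly how the paper handles it, so the proposal is correct and not materially different.
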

\begin{proof}
See Appendix  \ref{proof_thm2}.
\end{proof}

\subsection{Rate Coverage}
Since each BBS shows only one content, it can use the total available bandwidth $W$. The instantaneous achievable rate for a typical user located at origin, while receiving the content, is 
\begin{align}
    \Rate&=\xi{W}\log_{2}(1+\SINR).
\end{align}
From \eqref{eq:3.28}, the rate coverage probability is given as:
\begin{align}
    \Rc(\rho)
    &=\prob{\SINR>2^{\rho/ (\xi W)}-1}=\Pc(2^{\rho/ (\xi W)}-1)\label{eq:3.28_2}
    \end{align}
 where $\Pc$ is given in Theorem \ref{thm2}.

\subsection{{Network Revenue}}

Let $\rho$ denote the minimum rate required for a user to be able to view the content. Then, the rate coverage $\Rc$ at  $\rho$ denotes the  fraction of users that are able to view this content. Therefore, $\mathrm{r}_c$ unit of revenue will be earned by the network from a particular class, since only $\Rc$ fraction of users can watch it.
   The probability that the typical user receives the content as per its preference is $n/\Nc$ which is also the probability that the network will receive revenue from this typical user.  %Let $\rho$ denote the minimum rate to be able to view the advertisement. Then  Rate coverage ($\mathrm{r}_c$) at that $\rho$ denotes how much fractions of users are able to view this ad.
%Assuming one unit of money comes if everyone of that class is able to see the ad. Therefore, $\mathrm{r}_c$ unit of money is earned if only $\mathrm{r}_c$ people can watch it. 
Similar to previous sections, the total revenue can be computed as
\begin{align}
\Rn&=\frac{n}{\Nc}\cdot\rcv(\rho),\label{eq:5.37}
\end{align}
where the rate coverage $\Rc$ for the typical user is given by \eqref{eq:3.28_2}.

\subsection{Numerical Results}
We now present numerical results for the considered scenario II. The parameters are stated in Table \ref{table:Tabel1}.
%
%%\chapterendsymbol
%\subsection{Scenario 1 :User distribution with a high level of heterogeneity}
%\textbf{Impact of $n$ on Total Revenue}
%From Fig.  \ref{fig:5.1} observe that initially, the revenue increases with increase in $n$ up to a certain value and thereafter start decreasing. If $n $ increases the available bandwidth for each advertisement decreases. From \eqref{eq:5.2} revenue is proportional to $n$ but from \eqref{eq:5.1} rate is inversely proportional to $n$, therefore, it has a negative impact on rate coverage.
%\begin{figure}[ht!]
%    \centering
%    \includegraphics[width=8cm]{rev1_vs_n_diffTh}
%   \caption{Variation of total revenue with respect to no. of advertisement for different rate threshold $\rho$ (Mbps). }
%    \label{fig:5.1}
%\end{figure}
 %Fig. \ref{fig:5.1} shows the variation of total revenue with respect to advertisement granularity $n$.  From Fig.  \ref{fig:5.1} observe that analytical plot is perfectly matching with simulated value which validates our results.
 
 %\noindent \textbf{Impact of $n$ on total revenue: }
% \subsection{Scenario 2: User Classes are Spatially Separated}

%Variation of total revenue with respect to allowed number of user classes for different rate threshold $\rho$ (Mbps) (which is a proxy for content quality requested). Content granularity is  $\Nc=15$. Other parameters are according to Table \ref{table:Tabel1}. It is observed that an optimal value of $n$ can provide the maximum revenue to the network which depends on the content quality requested.

 \begin{figure}[ht!]
    \centering

    \includegraphics[width=8cm]{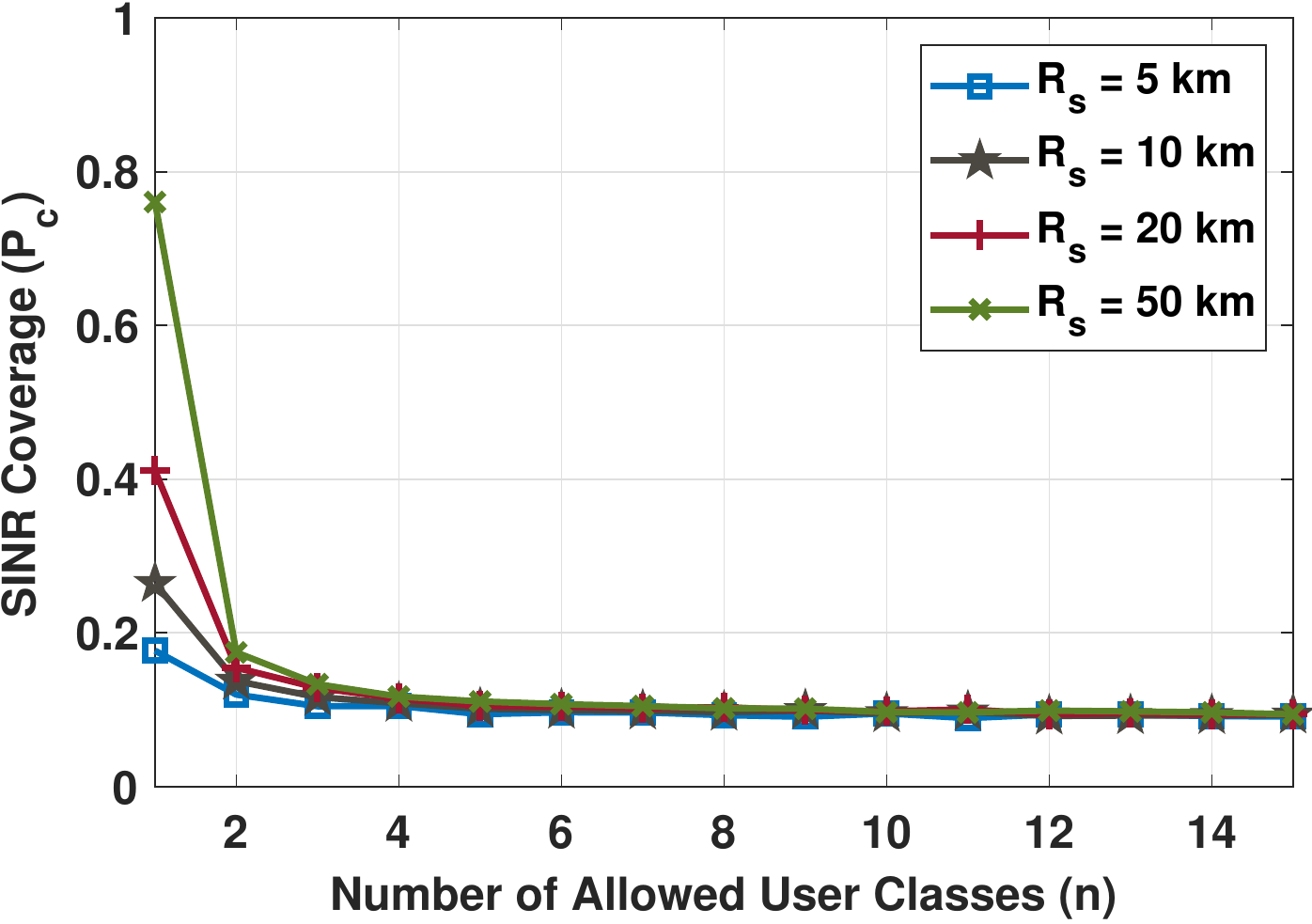}\  
    \includegraphics[width=8cm]{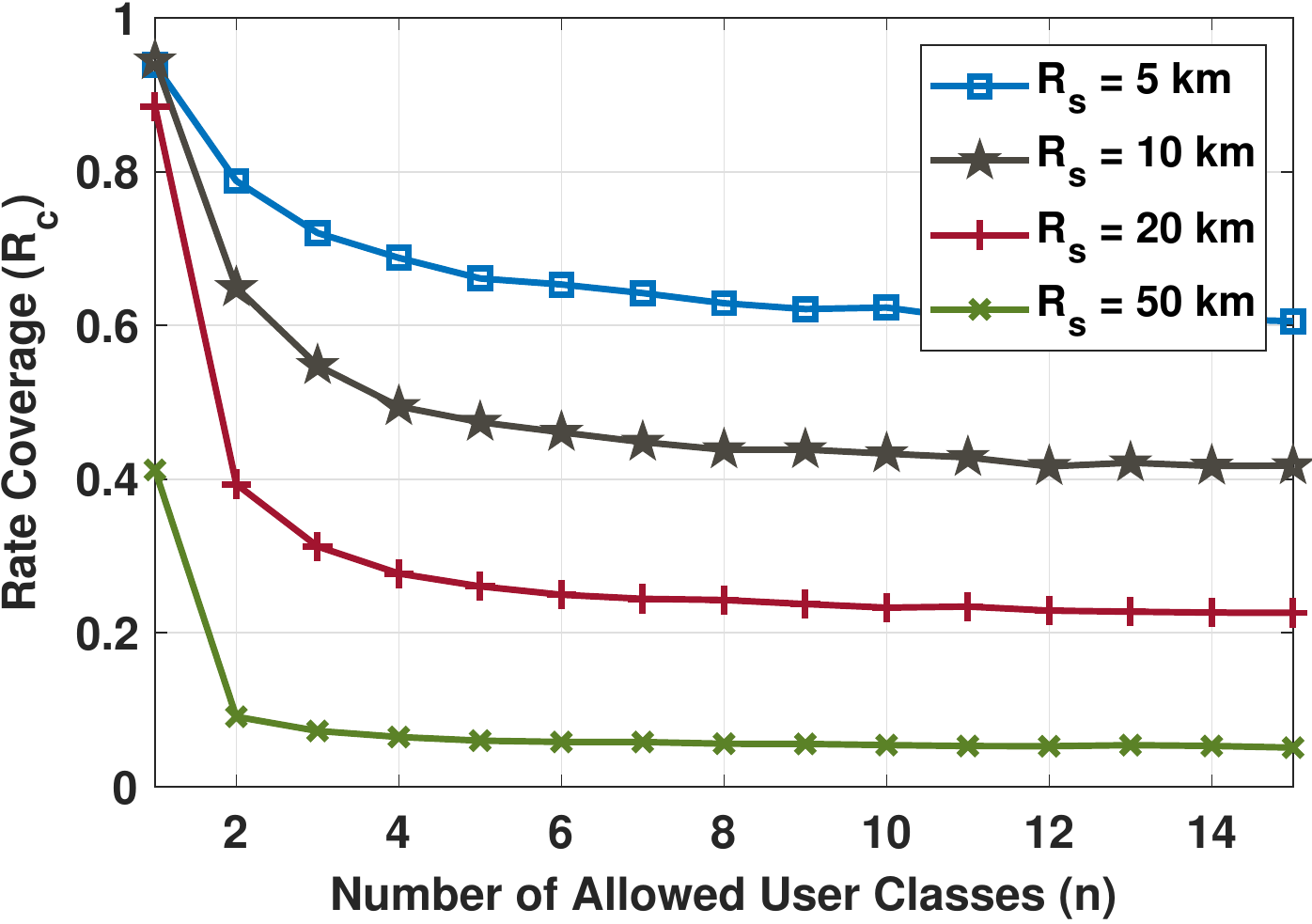}
   \caption{
   	Variation of SINR and rate coverage probability with respect to allowed number of user classes for different values of connectivity radius ($R_\sym$ in km) at a typical user in a broadcast system with geographically separated user classes. Content granularity is  $\Nc=15$, SINR threshold $\tau=10$ dB and rate threshold $\rho=15$ Mbps. Network density is 
$\lambda=.014$/km$^2$. The bandwidth varies with $R_\sym$ according to \eqref{eq:WTsymRelation1} with maximum value at 80 MHz. Other parameters are according to Table \ref{table:Tabel1}. The coverage decreases with $n$ due to increased interference at the typical user.  	
   	%[original label] SINR coverage vs. number of advertisements $n$ for different size of connectivity region ($R_\sym$ in km) with $\alpha=4,\, \lambda=0.0014 \text{BBSs}/\text{km}^2,\, \tau=10\text{dB}$ and $\Nc=15$.
   	}
    \label{fig:5.2}
    \end{figure}
    
\noindent\textbf{Impact of $n$ on SINR and rate coverage probability:}
 %From the expressions of $M(t,\,\rcls)$ and $N(t,\,\rcls)$ (\eqref{eq:M} and \eqref{eq:N} respectively) we observe that, the SINR coverage depends on the advertisement granularity $n$.
Fig. \ref{fig:5.2}(a) shows the variation of SINR coverage with $n$ for different values of connectivity  radius $R_\sym$. Here, SINR threshold  $\tau=10$ dB and there are $\Nc=15$ user classes. From Fig. \ref{fig:5.2}, we observe that for a fix value of $R_\sym$, SINR coverage decreases with increase in  $n$. This is due to the fact that more BBSs interfere as $n$ increases. However, after a certain $n$, the coverage doesn't changes much with $n$. This is because additional fraction of BBSs that interfere when $n$ increases by 1, is equal to $\frac{n}{n+1}-\frac{n-1}{n}=\frac1{n(n+1)}$ which decreases very fast with $n$. 
%For example, if we allow two ads, half of them will interfere, allowing 3 ads will make $2/3$ of them interfering. Similarly, allowing $n=\lambda\pi R_\sym^2$ will make all of them interfering. 
Therefore, after a certain $n$, there will not be a significant increase in the interference, which makes the  SINR constant with $n$.

It is also observed that the decrease in the SINR coverage probability is faster when $R_\sym$ is large. This can be justified as follows. First note that $n$ only affects the BBSs that can either be a interferer or a serving BBS, depending on the content they are showing. These BBSs lie in the ring of $R_\sym$ width denoting the region and their number approximately scales as  $\lambda\pi R_{\sym}^2$. Note that this number is large when $R_\sym$ is large. When we allow BBSs to show more advertisements/contents, a large number of these BBSs means that there is a larger number of potential interferes. %have large potential to there will be many BSs that will contribute to the interference making it large and so, the coverage decreases very fast. 
When $R_\sym$ is small, there are less number of these potential  BBSs (or even 0), hence allowing more advertisements doesn't affect the coverage significantly.   
    Fig. \ref{fig:5.2}(b) shows the variation of rate coverage with respect to $n$ for different size of connectivity region. The rate coverage also follows similar behavior as SINR coverage as described in \eqref{eq:3.28_2}.

\begin{figure}[ht!]
    \centering

    \includegraphics[width=\plotsize]{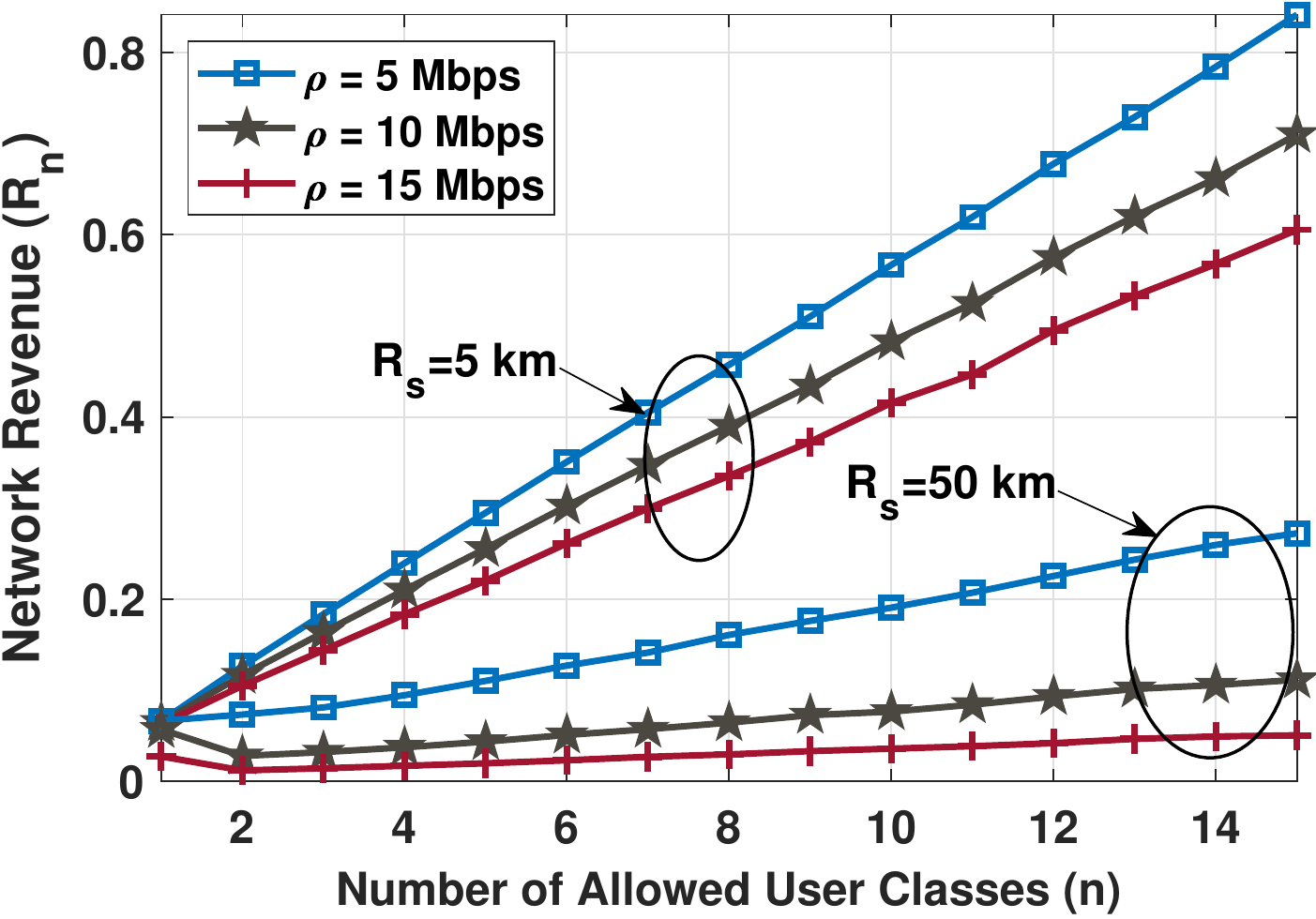}
   \caption{Variation of the total network revenue with respect to allowed number of user classes $n$ for different values of rate threshold $\rho$ (in Mbps) with $R_\sym=150\, \text{km}$  for a broadcast system with geographically separated user classes. Content granularity is  $\Nc=15$. Here, bandwidth varies with $R_\sym$ according to \eqref{eq:WTsymRelation1}. Other parameters are according to Table \ref{table:Tabel1}. 
   	}
    \label{fig:5.5}
    \end{figure}

\noindent\textbf{Impact of $n$ on the total network revenue:}    
Fig. \ref{fig:5.5} shows the impact of  $n$ on the total network revenue for different values of rate threshold.   Increase in $n$ means  catering to more number of user classes.  From Fig. \ref{fig:5.5}, we can observe that the revenue initially decreases  and then, increases with $n$.  The initial decrease in the revenue seen from $n=1$ to $n=2$ for some configurations is due to the decrease in rate coverage from $n=1$ to $n=2$, as observed in Fig. \ref{fig:5.2}(b) which dominated the increase in the revenue generated  due to catering to an additional user class.
% for large $R_\sym$ (here $R_\sym=150$ km) which can dominate the increase in revenue due to increase in $n$.
 However, this behavior may depend on the target rate threshold and the value of $R_\sym$. 
 %[remove comment if results show a difference in behavior due to rate threshold] At a higher rate threshold, $\Rn$ decreases with $n$, showing $n=1$ as the optimal choice. This implies that the optimal number of user classes that can be served, depends on the quality of the content. If the quality requested is high, then it may be better to serve less user classes, while more user classes can be served when the quality requested is lower.
   
\begin{figure}[ht!]
    \centering

    \includegraphics[width=\plotsize]{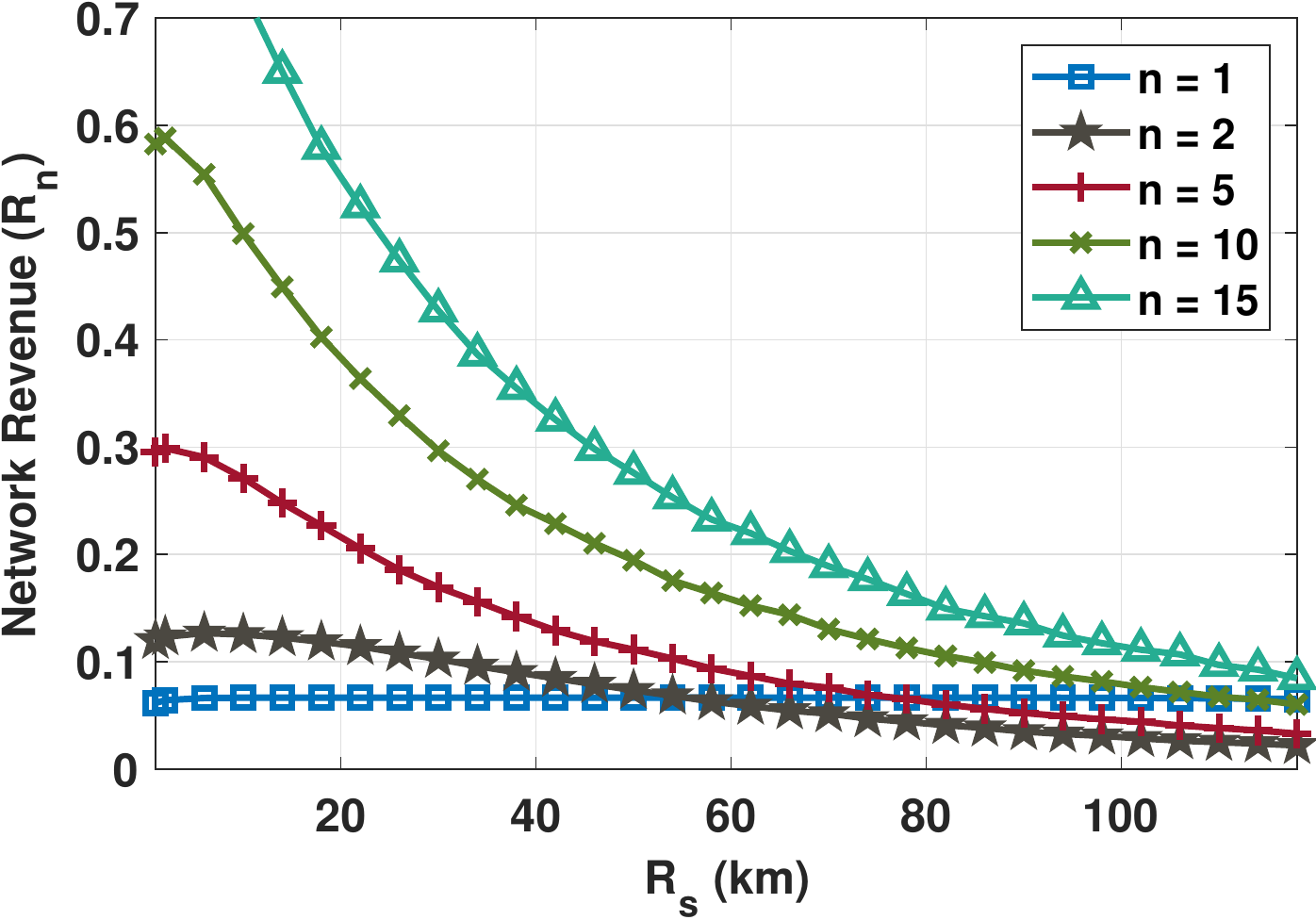}
   \caption{ Variation of the total network revenue with respect to the connectivity radius $R_\sym$ (in km) for different values of $n$ with $\rho=5\,\text{Mbps}$ for a broadcast system with geographically separated user classes. Content granularity is  $\Nc=15$. Here, bandwidth varies with $R_\sym$ according to \eqref{eq:WTsymRelation1} with maximum value at 50 MHz. Other parameters are according to Table \ref{table:Tabel1}. 
   	 }
    \label{fig:5.6}
    \end{figure}

\noindent    \textbf{Impact of $R_\sym$ on the total network revenue} 
  Fig. \ref{fig:5.6} shows the total revenue with respect to size of connectivity region $R_\sym$ for different $n$ with $\rho=10\,\text{Mbps}$ and $\Nc=15$.
We can observe that, the revenue decreases with increase in $R_\sym$. Also the revenue increases with $n$ for lower values of $R_\sym$. For a higher values of $R_\sym$, the behavior with $n$ is not monotonic. The revenue for $n\geq1$ may fall below the revenue for $n=1$. As discussed previously, the rate coverage decreases drastically from $n=1$ to $n=2$ for large $R_\sym$ which can dominate  the increase in the revenue generated  due to catering to an additional user class.
    
 \section{Conclusions}
 In this paper, we presented an analytical framework for the system performance of a broadcast network using stochastic geometry. Since all BBSs in the broadcast network are transmitting the same signal, signals from multiple BSs can be used to improve the coverage. We show that there exists a region such that all BBSs lying in this region may contribute to the desired signal power. We computed the SINR and rate coverage probability for a typical user located at the origin. We validated our results using numerical analysis. Using these results, we found that there exists an optimal region size which maximized the rate coverage. 
When users consists of many user classes having heterogenous content preference, network can schedule content to maximize its revenue. We  presented an analytical model of revenue thus obtained from users.  The results are validated through numerical analysis. We also present the variation of total revenue with respect to various parameters including number of user classes to be catered, size of connectivity region, and rate threshold. We show how content quality also affects the network decision on variety of content shown by the operator.

 \appendices
 \section{Proof for Lemma \ref{lemma:1}}\label{proof1}
 Using \eqref{eq:2}, the Laplace transform of the sum interference $I|\X_0$ is given as
\begin{align}
&\laplace{I\,|\,\X_0}(s)= \expects{}{e^{-sI}|\X_0}=\expects{}{\expS{-s\sum_{\X_{j}\in\Phi\cap \complementT{\Ball(0,{X_0}+R_{\sym})}}\beta_{j}{\dist{\X_{j}}}^{-\alpha}}}\nonumber\\
 &\overset{(a)}=\exp{\left(-\lambda\int_{\Phi\cap\complementT{\Ball(0,{X_0}+R_{\sym})}}\left(1-\expects{\beta}{e^{-s\beta \dist{\X}^{-\alpha}}}\right)\dd \x\right)}\nonumber\\%\intertext{converting into polar co-ordinates}
&\stackrel{(b)}=\exp{\left(-2\pi\lambda\int_{X_0+R_{\sym}}^{\RLimit}\left(1-\expects{\beta}{e^{-s\beta r^{-\alpha}}}\right)r\,\dd r \right)}
\end{align}
where (a) is due to the probability generating functional (PGFL) of homogeneous PPP \cite{Primer2016} and (b) is due to conversion to polar coordinates. Now, since $\beta$'s are exponetially distributed, using their MGF, we get
\begin{align}
&\laplace{I}(s)=\exp{\left(-2\pi\lambda\int_{\rcls+R_{\sym}}^{\RLimit}\frac{sr^{-\alpha}}{1+sr^{-\alpha}}r\,\dd r \right)}. \label{eq:3.10}
 \end{align}
 % due to the MGF of exponentially distributed $\beta$.
 
 Similarly, the Laplace transform of desired signal power conditioned on the nearest is BS located at $\X_0$, is given as: 
 \begin{align}
    	&\mathcal{L}_{\Prx\,\vert\,\X_{0}}(s)=\mathbb{E}_{\Prx\given}\left[e^{-s\Prx}|\X_0\right]\nonumber\\
   	=&\expects{\beta_i,\X_i}{\expS{-s\left(\beta_0 \dist{ \X_0}^{-\alpha}+\sum_{ \X_{i}\in\Phi\cap
		\Ball {(\origin,X_0+R_{\sym})}\setminus\X_0}\beta_{i}\dist{\X_{i}}^{-\alpha}\right)}}\nonumber\\
	=&
	\expects{ \Phi\,\given}{\expects{\beta_0\given}{e^{-s\beta_0 \dist{\X_0}^{-\alpha}}}
		\right.\nonumber\\
	 &\times \left.\prod_{ \X_{i}\in\Phi\cap \Ball {(\origin,X_0+R_{\sym})}\setminus\X_0}
		\expects{\beta_i\vert X_i,\X_{0}}{e^{-s\beta_{i}\dist{ \X_{i}}^{-\alpha})}}}\label{eq:app1:imstep1}
		%\intertext{from the MGF of exponentially distributed $\beta$}
\end{align}
Now, from Slivnyak theorem \cite{Primer2016}, we know that 	conditioned on $\X_0$, $\{\X_i:\X_i\in\Phi\}\setminus \{\X_0\}$ is a PPP with the same density. Therefore, using the PGFL of a PPP and noting that $\beta_i$'s are exponential RVs, we get 
\begin{align}
    	\mathcal{L}_{\Prx\,\vert\,\X_{0}}(s)=&\frac{1}{1+s\dist{\X_{0}} ^{-\alpha}}\exp\left(-2\pi\lambda \int_{X_0}^{X_0+R_{\sym}}\left(1-\frac{1}{1+sr^{-\alpha}}\right)r\dd r \right)\nonumber\\
    	=&\frac{1}{1+sX_0 ^{-\alpha}}\exp\left(-2\pi\lambda \int_{X_0}^{X_0+R_{\sym}}\frac{sr^{-\alpha}}{1+sr^{-\alpha}}r\dd r \right)\label{eq:3.9}.
    %\\\laplace{\Prx\given}(-jt)&=\frac{1}{1-jt\rcls ^{-\alpha}}\exp\left(-2\pi\lambda \int_{\rcls}^{\rcls+R_{\sym}}\frac{-jtr^{-\alpha}}{1-jtr^{-\alpha}}r\dd r \right)\nonumber\\&=\frac{1+jt\rcls ^{-\alpha}}{1+t^{2}\rcls ^{-2\alpha}}\exp\left(-2\pi\lambda \int_{\rcls}^{\rcls+R_{\sym}}\frac{-jtr^{-\alpha}}{1-jtr^{-\alpha}}r\dd r \right)\label{eq:3_10}
\end{align}

\section{Proof of Theorem \ref{thm1}}\label{proof_thm1}

Using Lemma \ref{lemma:1}, we get 
 % [I have moved from Lemma 1]
  \begin{align}
  \laplace{\Prx\givenm=\rcls}(-jt)%&=\frac{1}{1-jt\rcls ^{-\alpha}}\exp\left(-2\pi\lambda \int_{\rcls}^{\rcls+R_{\sym}}\frac{-jtr^{-\alpha}}{1-jtr^{-\alpha}}r\dd r \right)
 &\iftoggle{SingleColumn}{}{\nonumber\\&}
 =\frac{1+jt\rcls ^{-\alpha}}{1+t^{2}\rcls ^{-2\alpha}}\exp\left(-2\pi\lambda \int_{\rcls}^{\rcls+R_{\sym}}\frac{-jtr^{-\alpha}}{1-jtr^{-\alpha}}r\dd r \right)\label{app:eq:3_10}\\
\laplace{I\givenm=\rcls}(jt\tau)&=\exp{\left(-2\pi\lambda \int_{\rcls+R_{\sym}}^{\RLimit}\frac{jt\tau r^{-\alpha}}{1+jt\tau r^{-\alpha}}r\,\dd r\right)}%\nonumber\\&=\exp{\left(-2\pi\lambda \int_{\rcls+R_{\sym}}^{R}\frac{t^2\tau^2 r^{-2\alpha}+jt\tau r^{-\alpha}}{1+t^2\tau^2 r^{-2\alpha}}r\,\dd r\right)}
\label{app:eq:3.8}.
\end{align}
Substituting the above values in \eqref{eq:3.7}, we get
\begin{align}
&\mathbb{P}\left(\Prx>(I+\sigma^{2})\tau\givenm=\rcls\right)\nonumber\\
&=\frac{1}{2}+ \frac{1}{\pi}\int_{0}^{\infty}\frac{1}{t}\Im{\left(\frac{1+jt\rcls^{-\alpha}}{1+t^{2}\rcls^{-2\alpha}}\right)e^{-jt\tau\sigma^2}\times\right.\nonumber\\
&\hspace{5mm}\left.\exp\left(-2\pi\lambda\left(\int_{\rcls+R_{\sym}}^{\RLimit}\frac{jt\tau r^{-\alpha}}{1+jt\tau r^{-\alpha}}r\,\dd r
\iftoggle{SingleColumn}{}{\right.\right.\right.\nonumber\\&\hspace{10mm}\left.\left.\left.}
+\int_{\rcls}^{\rcls+R_{\sym}}\frac{-jtr^{-\alpha}}{1-jtr^{-\alpha}}r\,\dd r\right)
\right)}\dd t\nonumber
\end{align}
\begin{align}
&=\frac{1}{2}+ \frac{1}{\pi}\int_{0}^{\infty}\frac{1}{t}\Im{\left(\frac{1+jt\rcls^{-\alpha}}{1+t^{2}\rcls^{-2\alpha}}\right)e^{-jt\tau\sigma^2}\times\right.\nonumber\\
&\left.\exp\left(
	-2\pi\lambda\left(
		\int_{\rcls+R_{\sym}}^{\RLimit}\frac{t^2\tau^2 r^{-2\alpha+1}}{1+t^2\tau^2r^{-2\alpha}}\dd r
		+
		\int_{\rcls}^{\rcls+R_{\sym}}\frac{t^2r^{-2\alpha+1}}{1+t^2r^{-2\alpha}}\dd r
	\right.
\right.\right.\nonumber\\
&\left.\left.\left.+j\int_{\rcls+R_{\sym}}^{\RLimit}
\frac{t\tau r^{-\alpha+1}}{1+t^2\tau^2r^{-2\alpha}}\dd r\,-\,j \int_{\rcls}^{\rcls+R_{\sym}}\frac{tr^{-\alpha+1}}{1+t^2r^{-2\alpha}}\dd r\right)\right)}\dd t\label{eq:3.13}
\end{align}
where the last step is obtained using multiplication of conjugate terms. 
Now, if we define
\begin{align}
    &M(t,\rcls)
    \iftoggle{SingleColumn}{}{\nonumber\\&}
    =2\alpha{ t^{-2/\alpha}}\left[\int_{\rcls}^{\rcls+R_{\sym}}\frac{t^2r^{-2\alpha+1}}{1+t^2r^{-2\alpha}}\dd r+
\,\int_{\rcls+R_{\sym}}^{\RLimit}\frac{t^2\tau^2 r^{-2\alpha+1}}{1+t^2\tau^2r^{-2\alpha}}\dd r\right]\label{app1:mdef}\\
&N(t,\rcls)
\iftoggle{SingleColumn}{}{\nonumber\\&}
=2\alpha {t^{-2/\alpha}}\left[\int_{\rcls+R_{\sym}}^{\RLimit}\frac{t\tau r^{-\alpha+1}}{1+t^2\tau^2r^{-2\alpha}}\dd r\,-\,\int_{\rcls}^{\rcls+R_{\sym}}\frac{tr^{-\alpha+1}}{1+t^2r^{-2\alpha}}\dd r\right]\label{app1:ndef}
\end{align}
\eqref{eq:3.13} can be written as
\begin{align}
\iftoggle{SingleColumn}{}{&}\mathbb{P}\left(\Prx>(I+\sigma^{2})\tau\givenm=\rcls\right)
\iftoggle{SingleColumn}{&}{\nonumber\\&}
=\frac{1}{2}+ \frac{1}{\pi}\int_{0}^{\infty}\frac{1}{t}\Im{\left(\frac{1+jt\rcls^{-\alpha}}{1+t^{2}\rcls^{-2\alpha}}\right)\right.\nonumber\\
&\times\left.\exp\left(
	-2\pi\lambda 
	\addi{t^{2/\alpha}}M(t,\rcls)-j 2\pi\lambda\addi{ t^{2/\alpha}} N(t,\rcls)-jt\tau\sigma^2
	 \right)}\dd t\label{eq:3.13a}.
\end{align}
Now, with some trivial manipulations and substituting \eqref{eq:3.13a} in \eqref{eq:3.4}, we get 
\begin{align}
   \Pc(\tau, \lambda)
=&\frac{1}{2}+\frac{1}{\pi} \int_{0}^{\infty}\int_{0}^\infty2\pi\lambda \rcls e^{-\pi\lambda \rcls^2}\cdot\frac{1}{t}\cdot\left[\frac{1}{1+t^{2}\rcls ^{-2\alpha}}\right]
e^{-2\pi\lambda t^{2/\alpha}M(t,\,\rcls)/2\alpha}
\nonumber\\&
	\times\left[t\rcls^{-\alpha}\cos
		{\left(
			\frac{\pi}{\alpha}\lambda {t^{2/\alpha}}N(t,\,\rcls)+\,t\tau\sigma^2
		\right)}
\iftoggle{SingleColumn}{}{\right.\nonumber\\&\left.}
			-\sin{\left(\frac{\pi}{\alpha}\lambda{t^{2/\alpha}}N(t,\,\rcls)
			+\,t\tau\sigma^2\right)}
	\right]
\dd t\dd \rcls. \label{eq:3:27:old}
\end{align}
Further, the forms of $M$ and $N$ can be simplified by trivial manipulations and definition of incomplete Beta function in \eqref{app1:mdef} and \eqref{app1:ndef} to get
 \begin{align}
    &M(t,\rcls)=\bbfunc{
    						\frac1\alpha,
    						t^2(\rcls+R_{\sym})^{-2\alpha},
    						t^2\rcls^{-2\alpha}
    						} 
						\iftoggle{SingleColumn}{}{\nonumber
    						\\&
    						\qquad}
						+\tau ^{2/\alpha}
    						\bbfunc{
    						\frac1\alpha,
						%t^2(R)^{-2\alpha}
    						0,
    						t^2\tau^2(\rcls+R_{\sym})^{-2\alpha}
    						}, \text{ and }\label{thm1:mdef}\\
    &N(t,\,\rcls)=-\bbfunc{
    						\frac1\alpha+\frac12,
    						t^2(\rcls+R_{\sym})^{-2\alpha},
    						t^2\rcls^{-2\alpha}
    						} \iftoggle{SingleColumn}{}{\nonumber
    						\\&
    						\qquad}
					+\tau ^{2/\alpha}
    						\bbfunc{
    						\frac1\alpha+\frac12,
    						0,
    						%t^2(R)^{-2\alpha},
    						t^2\tau^2(\rcls+R_{\sym})^{-2\alpha}
    						}
    						\label{thm1:ndef}%\\
%  &  						\text{with } 
%\bbfunc{z,a,b}=\iftoggle{SingleColumn}{}{\nonumber\\&} \betafunc{\frac{1}{1+a}}{z}{-z+1}- \betafunc{\frac{1}{1+b}}{z}{-z+1}.
\end{align}
Now, we can substitute
\begin{align}
t&\rightarrow s /{(\lambda\pi)}^{\alpha/2}& 
u&\rightarrow v/\sqrt{\lambda\pi}\
\end{align}
in \eqref{eq:3:27:old}, \eqref{thm1:mdef} and \eqref{thm1:ndef} to get the desired result.

\section{Proof of Lemma \ref{lemma2}}\label{proof2}

From \eqref{eq:5.3}, the Laplace transform of desired signal power $\Prx$ conditioned that the nearest BS located at $\X_0$ is given as: 
 \begin{align}
    &\mathcal{L}_{\Prx\given}(s)=\mathbb{E}_{\Prx\,\given}\left[e^{-s\Prx}\right]\nonumber\\
&=\expects{\{\beta_i,\X_i\}\given}{\exp\left(-s\beta_0 { X_0}^{-\alpha}
\iftoggle{SingleColumn}{}{\right.\right.\nonumber\\
&
\left.\left.}
-\sum_{ \X_{i}\in\Phi\cap\Ball {(\origin,X_0+R_{\sym})}\setminus\X_0}
			s\beta_{i}{X_{i}}^{-\alpha}
			\indecator{M_i=M_0}\right)}\nonumber
\end{align}
which is similar to \eqref{eq:app1:imstep1} except the fact that the summation in the last term is over only those points that satisfy an additional condition $M_i=M_0$. From the independent thinning theorem, these points also form a PPP with density $\lambda \prob{M_i=M_0}=\lambda p$. Now using the PGFL of this PPP, we get		
\begin{align}
	%&=\expects{ \X_i\given}{\expects{\beta_0\given}{e^{-s\beta_0 \dist{\X_0}^{-\alpha}}}\right.\nonumber\\&\hspace{10mm}\left.\prod_{ \X_{i}\in\Phi\cap \Ball {(\origin,\rcls+R_{\sym})}\setminus\X_0}\expects{\beta_i\vert X_i,\X_{0}}{e^{-s\beta_{i}\dist{ \X_{i}}^{-\alpha}\indecator{U_i<p}}}}\nonumber\\
	%  &=\frac{1}{1+s\dist{\X_{0}} ^{-\alpha}}\exp\left(-2\pi\lambda p \int_{\rcls}^{\rcls+R_{\sym}}\left(1-\frac{1}{1+sr^{-\alpha}}\right)r\dd r \right)\nonumber\\
    \mathcal{L}_{\Prx\given}(s) 
    &=\frac{1}{1+sX_0^{-\alpha}}\exp\left(-2\pi\lambda p \int_{X_0}^{X_0+R_{\sym}}\frac{sr^{-\alpha}}{1+sr^{-\alpha}}r\dd r \right)\label{eq:5.20}. 
\end{align}
Now,  from \eqref{eq:5.2}, the Laplace transform of sum interference is 
\begin{align}
	&\laplace{I}(s)= \expects{I}{e^{-sI}}\nonumber\\
	&=\expects{I}{
		\exp\left(
		-\sum_{\X_{i}\in\Phi\cap \Ball(0,X_0+R_{\sym})\setminus \X_0}
						  s\beta_{i}{{X_{i}}}^{-\alpha}\indecator{M_i\ne M_0}
						  \iftoggle{SingleColumn}{}{\right.\right.\nonumber\\
				&\hspace{10mm}\left.\left.}
					-\sum_{\X_{j}\in\Phi\cap\complementT{\Ball(0,X_0+R_{\sym})}}
						s\beta_{j}{X_{j}}^{-\alpha}
					\right)}\nonumber\\
% &\overset{(a)}=\exp\left(-\lambda\int_{\Phi\cap\complementT{\Ball(0,\rcls+R_{\sym})}}\left(1-\expects{\beta}{e^{-s\beta \dist{\X}^{-\alpha}}}\right)\dd \x\right.\nonumber\\
% &\left.-\lambda (1-p)\int_{\Phi\cap \Ball(0,\rcls+R_{\sym})\setminus \X_0}\left(1-\expects{\beta}{e^{-s\beta \dist{\X}^{-\alpha}}}\right)\dd \x\right)\nonumber\\
	&\stackrel{(a)}{=}
			\exp\left(-2\pi\lambda (1-p)\int_{X_0}^{X_0+R_{\sym}}
			\left(1-\expects{\beta}{e^{-s\beta r^{-\alpha}}}\right)r \dd r
			\iftoggle{SingleColumn}{}{\right.\nonumber\\
	&\left.\hspace{24mm}}
		-2\pi\lambda\int_{X_0+R_{\sym}}^{\RLimit}
			\left(1-\expects{\beta}{e^{-s\beta r^{-\alpha}}}\right)r\dd r
   	 \right)\nonumber\\
&\overset{(b)}=\exp\left(-2\pi\lambda(1-p)\int_{\rcls}^{X_0+R_{\sym}}\frac{sr^{-\alpha}}{1+sr^{-\alpha}}r\dd r
\iftoggle{SingleColumn}{}{\right.\nonumber\\&\hspace{25mm}\left.}
-2\pi\lambda\int_{X_0+R_{\sym}}^{\RLimit}\frac{sr^{-\alpha}}{1+sr^{-\alpha}}r\dd r \right),\label{eq:5.16}
\end{align}
 where (a) is due to the probability generating functional of homogeneous PPP and independent thinning theorem and (b) is due to MGF of exponentially distributed $\beta_i$'s.

\section{Proof of Theorem \ref{thm2}}\label{proof_thm2}
 Using Lemma \ref{lemma2}, we get
\begin{align}
    &\laplace{\Prx\givenm=\rcls}(-jt)%&=\frac{1}{1-jt\rcls ^{-\alpha}}\exp\left(-2\pi\lambda p \int_{\rcls}^{\rcls+R_{\sym}}\frac{-jtr^{-\alpha}}{1-jtr^{-\alpha}}rdr \right)\label{eq:5.21}
    \iftoggle{SingleColumn}{}{\nonumber\\
    &}
	=\frac{1+jt\rcls ^{-\alpha}}{1+t^{2}\rcls ^{-2\alpha}}\exp\left(-2\pi\lambda p \int_{\rcls}^{\rcls+R_{\sym}}\frac{-jtr^{-\alpha}}{1-jtr^{-\alpha}}rdr \right)\label{eq:5.22}\\
%\end{align}
%\begin{align}
&\laplace{I\givenm=\rcls}(jt\tau)=\exp\left(-2\pi\lambda\left[(1-p)\int_{\rcls}^{\rcls+R_{\sym}}\frac{jt\tau r^{-\alpha}}{1+jt\tau r^{-\alpha}}r\,\dd r
\iftoggle{SingleColumn}{}{\right.\right.\nonumber\\&\hspace{20mm}
\left.\left.}+ \int_{\rcls+R_{\sym}}^{\RLimit}\frac{jt\tau r^{-\alpha}}{1+jt\tau r^{-\alpha}}r\,\dd r\right]\right)\label{eq:5.17}.
\end{align}
Substituting the above values in \eqref{eq:3.7}, we get
%From \eqref{eq:5.33}
\begin{align}
&
\mathbb{P}\left(\Prx>(I+\sigma^{2})\tau\givenm=\rcls\right)
\nonumber\\
=&\frac{1}{2}+ \frac{1}{\pi} \int_{0}^{\infty}\frac{1}{t}\Im{\left(\frac{1+jt\rcls^{-\alpha}}{1+t^{2}\rcls ^{-2\alpha}}\right)e^{-jt\tau\sigma^2}\cdot\right.\nonumber\\
&\left.
\exp\left(-2\pi\lambda\left(\int_{\rcls+R_{\sym}}^{\RLimit}\frac{jt\tau r^{-\alpha}}{1+jt\tau r^{-\alpha}}r\,\dd r+
\iftoggle{SingleColumn}{}{\right.\right.\right.\nonumber\\&\indent\left.\left.\left.}
(1-p)\int_{\rcls}^{\rcls+R_{\sym}}\frac{jt\tau r^{-\alpha}}{1+jt\tau r^{-\alpha}}r\,\dd r
\right.\right.\right.\nonumber\\&\indent\left.\left.\left.
+p\int_{\rcls}^{\rcls+R_{\sym}}\frac{-jtr^{-\alpha}}{1-jtr^{-\alpha}}r\,\dd r\right)\right)
}\dd t \nonumber
\\=&\frac{1}{2}+ \frac{1}{\pi} 
\int_{0}^{\infty}\frac{1}{t}\Im{\left(\frac{1+jt\rcls^{-\alpha}}{1+t^{2}\rcls ^{-2\alpha}}\right)e^{-jt\tau\sigma^2}\cdot\right.\nonumber\\&
\left.
\exp\left(-2\pi\lambda\left(\int_{\rcls}^{\rcls+R_{\sym}}\frac{p\,t^2r^{-2\alpha+1}}{1+t^2r^{-2\alpha}}\dd r\,
\iftoggle{SingleColumn}{}{\right.\right.\right.\nonumber\\&\left.\left.\left.+\,}
\int_{\rcls+R_{\sym}}^{\RLimit}\frac{t^2\tau^2 r^{-2\alpha+1}}{1+t^2\tau^2r^{-2\alpha}}\dd r\,+\,
\int_{\rcls}^{\rcls+R_{\sym}}\frac{(1-p)t^2\tau^2 r^{-2\alpha+1}}{1+t^2\tau^2r^{-2\alpha}}\dd r\right.\right.\right.\nonumber\\&\left.\left.\left.+\,j\left[\int_{\rcls+R_{\sym}}^{\RLimit}\frac{t\tau r^{-\alpha+1}}{1+t^2\tau^2r^{-2\alpha}}\dd r\,+\,
\int_{\rcls}^{\rcls+R_{\sym}}\frac{(1-p)\,t\tau r^{-\alpha+1}}{1+t^2\tau^2r^{-2\alpha}}\dd r\,\iftoggle{SingleColumn}{}{\right.\right.\right.\right.\nonumber\\&\left.\left.\left.\left.}
-\,\int_{\rcls}^{\rcls+R_{\sym}}\frac{p\,tr^{-\alpha+1}}{1+t^2r^{-2\alpha}}\dd r\right]\right)\right)}\dd t\label{eq:5.23}
\end{align}
where the last step is obtained using multiplication of conjugate terms and rearranging into the real and imaginary parts. 
Now, if we define 
\begin{align}
  M'(t,\rcls)
  \iftoggle{SingleColumn}{}{\nonumber\\
}=&
2{\alpha t^{-\alpha/2}}\left[ \int_{\rcls}^{\rcls+R_{\sym}}
\frac{p\,t^2r^{-2\alpha+1}}{1+t^2r^{-2\alpha}}\dd r \right.\nonumber\\&
\left.+\int_{\rcls+R_{\sym}}^{R}
  \frac{t^2\tau^2 r^{-2\alpha+1}}{1+t^2\tau^2r^{-2\alpha}}\dd r+\int_{\rcls}^{\rcls+R_{\sym}}\frac{(1-p)t^2\tau^2 r^{-2\alpha+1}}{1+t^2\tau^2r^{-2\alpha}}\dd r\right],\label{app4:mdef}\\
    N'(t,\rcls)
    \iftoggle{SingleColumn}{}{\\
}=&
2\alpha {t^{-\alpha/2}}\left[\int_{\rcls+R_{\sym}}^{R}\frac{t\tau r^{-\alpha+1}}{1+t^2\tau^2r^{-2\alpha}}\dd r\right.\nonumber\\&+\left.\int_{\rcls}^{\rcls+R_{\sym}}\frac{(1-p)\,t\tau r^{-\alpha+1}}{1+t^2\tau^2r^{-2\alpha}}\dd r-\int_{\rcls}^{\rcls+R_{\sym}}\frac{p\,tr^{-\alpha+1}}{1+t^2r^{-2\alpha}}\dd r\right],\label{app4:ndef}
\end{align}
\eqref{eq:5.23} can be written as
\begin{align}
\mathbb{P}\left(\Prx>(I+\sigma^{2})\tau\givenm=\rcls\right)&\iftoggle{SingleColumn}{}{\nonumber\\
&}
=\frac{1}{2}+ \frac{1}{\pi}\int_{0}^{\infty}\frac{1}{t}\Im{\left(\frac{1+jt\rcls^{-\alpha}}{1+t^{2}\rcls^{-2\alpha}}\right)\right.\nonumber\\
&\times\left.\exp\left(
	-2\pi\lambda 
	\addi{t^{2/\alpha}}M'(t,\rcls)-j 2\pi\lambda\addi{ t^{2/\alpha}} N'(t,\rcls)-jt\tau\sigma^2
	 \right)}\dd t\label{eq:5.23a}.
\end{align}
Now, with some trivial manipulations and substituting \eqref{eq:5.23a} in \eqref{eq:3.4}, we get 
%\eqref{eq:5.33}
\begin{align}
 \Pc(\tau, \lambda)=&\frac{1}{2}+\frac{1}{\pi} \int_{0}^{\infty}\int_{0}^\infty2\pi\lambda \rcls e^{-\pi\lambda \rcls^2}\cdot\frac{1}{t}
\cdot\left[\frac{1}{1+t^{2}\rcls ^{-2\alpha}}\right]\cdot
e^{-
\frac\pi\alpha
\lambda t^{2/\alpha}M'(t,\,\rcls)}
\nonumber\\&%\hspace{-12mm}
\times \left[t\rcls^{-\alpha}\cos{\left(\frac\pi\alpha\lambda {t^{2/\alpha}} N'(t,\rcls)+\,t\tau\sigma^2\right)}
 \iftoggle{SingleColumn}{}{\right.\nonumber\\&\hspace{-12mm}\left.}
-\sin{\left(\frac\pi\alpha\lambda {t^{2/\alpha}} N'(t,\rcls)+\,t\tau\sigma^2\right)}\right]
\dd t\dd \rcls.\label{eq:5.33_old}%\\
\end{align}
Further, the forms of $M$ and $N''$ can be simplified 
%in \eqref{thm2:mdef} and \eqref{thm2:ndef} can be obtained 
by trivial manipulations and definition of incomplete Beta function in \eqref{app4:mdef} and \eqref{app4:ndef} to get
%
%
%
% where $M'(t,\,\rcls)$ and $N'(t,\,\rcls)$ is given as
 \begin{align}
&M'(t,\rcls)=p\bbfunc{\frac1{\alpha},t^2(\rcls+R_{\sym})^{-2\alpha},t^2\rcls^{-2\alpha}}\nonumber
\\
&
+\tau^{2/\alpha}(1-p)
\bbfunc{\frac1{\alpha},
(t\tau)^2(\rcls+R_{\sym})^{-2\alpha},(t\tau)^2\rcls^{-2\alpha}}
+\tau ^{2/\alpha}
\bbfunc{\frac1{\alpha},
%t^2R^{-2\alpha},
0,
t^2\tau^2(\rcls+R_{\sym})^{-2\alpha}}
.\label{thm2:mdef}
\end{align}
%\begin{align}
%&M'(t,\,\rcls)=p\,\left[\betafunc{\frac{1}{1+t^2(\rcls+R_{\sym})^{-2\alpha}}}{\frac{1}{\alpha}}{-\frac{1}{\alpha}+1}\right.\nonumber\\&\hspace{25mm}\left.-\betafunc{\frac{1}{1+t^2\rcls^{-2\alpha}}}{\frac{1}{\alpha}}{-\frac{1}{\alpha}+1}\right]\nonumber\\&+\tau^{2/\alpha}(1-p)\,\left[\betafunc{\frac{1}{1+(t\tau)^2(\rcls+R_{\sym})^{-2\alpha}}}{\frac{1}{\alpha}}{-\frac{1}{\alpha}+1}\right.\nonumber\\&\hspace{25mm}\left.-\betafunc{\frac{1}{1+(t\tau)^2\rcls^{-2\alpha}}}{\frac{1}{\alpha}}{-\frac{1}{\alpha}+1}\right]\nonumber\\&+\tau ^{2/\alpha}\left[\betafunc{\frac{1}{1+t^2R^{-2\alpha}}}{\frac{1}{\alpha}}{-\frac{1}{\alpha}+1}\right.\nonumber\\&\hspace{25mm}\left.-\betafunc{\frac{1}{1+t^2\tau^2(\rcls+R_{\sym})^{-2\alpha}}}{\frac{1}{\alpha}}{-\frac{1}{\alpha}+1}\right].\label{thm2:mdef}
%\end{align}
    \text{and,}
     \begin{align}
&N'(t,\rcls)=-p\bbfunc{\frac1{\alpha}+\frac12,t^2(\rcls+R_{\sym})^{-2\alpha},t^2\rcls^{-2\alpha}}\nonumber
\\
&
+\tau^{2/\alpha}(1-p)
\bbfunc{\frac1{\alpha}+\frac12,
(t\tau)^2(\rcls+R_{\sym})^{-2\alpha},(t\tau)^2\rcls^{-2\alpha}}
+\tau ^{2/\alpha}
\bbfunc{\frac1{\alpha}+\frac12,
%t^2R^{-2\alpha},
0,
t^2\tau^2(\rcls+R_{\sym})^{-2\alpha}}
.\label{thm2:ndef}
\end{align}
Now, we can substitute
\begin{align}
t&\rightarrow s /{(\lambda\pi)}^{\alpha/2}& 
u&\rightarrow v/\sqrt{\lambda\pi}\
\end{align}
in \eqref{eq:5.33_old}, \eqref{thm2:mdef} and \eqref{thm2:ndef} to get the desired result.

%Substitute \eqref{eq:5.23} in \eqref{eq:cov_prob}.
\bibliographystyle{IEEEtran}
% Generated by IEEEtran.bst, version: 1.14 (2015/08/26)

%\bibliography{reference}
\end{document}